\definecolor{codegreen}{rgb}{0,0.6,0}
\definecolor{codegray}{rgb}{0.2,0.2,0.2}
\definecolor{codepurple}{rgb}{0.58,0,0.82}
\definecolor{backcolour}{rgb}{0.95,0.95,0.92}
\lstdefinestyle{mystyle}{
    backgroundcolor=\color{backcolour},   
    commentstyle=\color{codegreen},
    keywordstyle=[1]\color{magenta},
    keywordstyle=[2]\color{blue},
    otherkeywords = {PerfectGroup,CharacterTable,Irr,Norm,ScalarProduct,gap>},
    morekeywords = [1]{PerfectGroup,CharacterTable,Irr,Norm,ScalarProduct,Degree},
    morekeywords = [2]{gap>,>},
    numberstyle=\tiny\color{codegray},
    stringstyle=\color{codepurple},
    basicstyle=\ttfamily\footnotesize,
    breakatwhitespace=false,         
    breaklines=true,                 
    captionpos=b,                    
    keepspaces=true,                 
    numbers=left,                    
    numbersep=5pt,   
    frame = single, 
    showspaces=false,                
    showstringspaces=false,
    showtabs=false,  
    morecomment=[l][\color{codegreen}]{\#},
    alsoletter=?>,
    tabsize=2
}
\theoremstyle{theorem}
\theoremstyle{definition}
\newtheorem{theorem}{Theorem}
\newtheorem*{conjecture*}{Conjecture}
\newtheorem{lemma}{Lemma}
  \renewcommand{\theequation}{\thesection\arabic{equation}}%
  \renewcommand{\thetheorem}{\thesection\arabic{theorem}}%
  \renewcommand{\thelemma}{\thesection\arabic{lemma}}%
\newcommand\numberthis{\addtocounter{equation}{1}\tag{\theequation}}
    \DeclareMathAlphabet{\mathsfit}{T1}{\sfdefault}{\mddefault}{\sldefault}
    \SetMathAlphabet{\mathsfit}{bold}{T1}{\sfdefault}{\bfdefault}{\sldefault}
    \renewcommand{\SU}{\mathrm{SU}}
    \renewcommand{\U}{\mathrm{U}}
    \newcommand{\ico}{2\mathrm{I}}
\newcommand{\bmsf}[1]{\bm{\mathsf{#1}}}
\renewcommand{\G}{\mathcal{G}}
\newcommand{\Glog}{\mathsf{G}}
\newcommand{\f}{\bmsf{f}}
\renewcommand{\F}{\bmsf{F}}
\newcommand{\irrE}{\bmsf{R}}
\newcommand{\irre}{\bmsf{R}^\downarrow}
\newcommand{\irrechar}{R^\downarrow}
\newcommand{\irrGAP}{\bmsf{\chi}}
\newcommand{\irrGAPchar}{\chi}
\newcommand{\logirr}{\bmsf{\lambda}}
\newcommand{\irrIco}{\bmsf{\pi}}
\newcommand{\irrIcochar}{\pi}
\begin{document}
\title{Quantum Codes from Twisted Unitary $t$-groups}
\thanks{These authors contributed equally to this work.}
\author{Eric Kubischta}
\email{erickub@umd.edu} 
\author{Ian Teixeira}
\email{igt@umd.edu}
\affiliation{Joint Center for Quantum Information and Computer Science,
NIST/University of Maryland, College Park, Maryland 20742 USA}
\begin{abstract}
We introduce twisted unitary $t$-groups, a generalization of unitary $t$-groups under a twisting by an irreducible representation. We then apply representation theoretic methods to the Knill-Laflamme error correction conditions to show that twisted unitary $t$-groups automatically correspond to quantum codes with distance $d=t+1$. By construction these codes have many transversal gates, which naturally do not spread errors and thus are useful for fault tolerance. 
\end{abstract}

\maketitle

\section{Introduction}

There is a rich history of connections between classical $ t $-designs (both spherical and orthogonal) and classical information theory \cite{ConwaySloane}.  Similarly, there are many applications of quantum $ t $-designs (both complex projective and unitary) to quantum information theory, including  tomography \cite{tomography1,tomography2}, randomized benchmarking \cite{benchmarking1}, cryptography \cite{crypto1}, and chaos \cite{chaos1}.

However, until now, no connection has been made between quantum $ t $-designs and quantum error correcting codes, despite the fact that there are deep connections between classical $ t $-designs and classical error-correcting codes, such as the theorem of Assmus and Matteson \cite{ConwaySloane,Assmus_Mattson_1969}.

Among quantum $ t $-designs, unitary $t$-designs are especially commonplace in the quantum information literature, and in the special case that a unitary $t$-design $ \G $ forms a finite group it is called a \textit{unitary $t$-group}. Unitary $t$-groups are well studied \cite{unitary_designs_and_codes,GrossCharacters&tGroups,rho_designs,bannai2020}.


In forging a connection between quantum $ t $-designs and quantum error correcting codes we first review unitary $t$-groups using tools from representation theory; similar techniques were introduced in \cite{GrossCharacters&tGroups,rho_designs,bannai2020}. We then define \textit{twisted unitary $t$-groups} and argue that they are a natural generalization of unitary $t$-groups under a ``twisting" by $\logirr$, an irreducible representation (irrep) of $\G$. In the special case that $\logirr$ is the trivial irrep $\irrep{1}$, twisted unitary $t$-groups are equivalent to the regular notion of unitary $t$-groups. 

\emph{Application. --} Our main application of twisted unitary $t$-groups is in constructing quantum codes that naturally have many transversal logical gates. A logical gate for an $ n $ qudit quantum code is called transversal if it can be implemented as $U_1 \otimes \cdots \otimes U_n$ where each unitary $ U_i $ acts on a single physical qudit. As an application of twisted unitary $t$-groups we show that they induce quantum codes with distance $d = t+1$, and that the corresponding quantum codes can have very large groups of transversal gates: for each $ g \in \G $ the physical gate $ g^{\otimes n } $ implements the logical gate $ \logirr(g) $. A logical gate implemented by the physical gate $ g^{\otimes n} $ acts on each physical qudit separately and so does not spread errors between physical qudits, for this reason such gates are likely to be useful for fault tolerance.  

We highlight our application via two examples. In our first example we show that the unitary $ 5 $-group $\ico$, the binary icosahedral subgroup of $\SU(2)$, forms a twisted unitary $2$-group with respect to a particular 2-dimensional irrep. This yields $n$-qubit codes for all odd $n \geq 7$ of distance $d = 3$ (meaning these codes can correct an arbitrary single error). Each of these codes implements all of $\ico$ transversally. In our second example, we show that the unitary $3$-group $\Sigma(360\phi)$, a maximal subgroup of $\SU(3)$ (and well studied in the high energy literature \cite{SU3Fairbairn1964FiniteAD,SU3Ludl_2011,SUtree}), forms a twisted unitary $1$-group with respect to two distinct $3$-dimensional irreps. These yield $n$-qutrit quantum codes of distance $d=2$ for all $n \geq 5$ not divisible by $ 3 $. Each of these codes implements all of $\Sigma(360\phi)$ transversally. For both our examples an encoding circuit can be obtained using \cite{Lev}.

\emph{Motivation.--} Since transversal gates do not propagate errors between physical qudits, and thus are often useful for fault tolerance, it is desirable to have as many transversal gates as possible. However the Eastin-Knill theorem \cite{eastinKnill} shows that a non-trivial ($ d \geq 2 $) code can have only finitely many transversal gates. So the best we can do is find codes whose transversal gates form a maximal finite subgroup. When a maximal group of transversal gates is achieved, for example the Clifford group, the next step is a fault tolerant implementation of a single gate $\tau$ outside of the group $\G$. The most popular method for implementing $\tau$ in a fault tolerant fashion is by using magic state distillation \cite{magicstatedist}, however, this is considered expensive \cite{magiccost1,magiccost2}. It is therefore crucial for fault tolerant gate synthesis to minimize the number of $\tau$ gates that are needed.

It was proven in \cite{superGoldenGates} that $\ico$, together with a particular $\tau$ gate, is the \textit{optimal} universal gate set for qubits in the sense that this gate set can quickly approximate any gate in $\SU(2)$ while minimizing the number of expensive $\tau$ gates that are needed. Therefore, our first example of $\ico$ transversal codes is well motivated.

Why is the $\ico$ universal gate set optimal? Unfortunately the work in \cite{superGoldenGates} is for the qubit case and there is no rigorous proof generalizing it to qudits (although some work towards $\SU(3)$ has already been accomplished \cite{goldenGatesPU(3)}). But one heuristic is that $\ico$ is a unitary $5$-group whereas every other subgroup of $\SU(2)$ is at most a unitary $3$-group. This means $\ico$ is more ``spread out" than the other subgroups and so an approximation algorithm doesn't have to use as many $\tau$ gates in order to reach the more hidden recesses of $\SU(2)$. 
The largest unitary $t$-group in $\SU(3)$ is in fact $\Sigma(360\phi)$. It is a unitary $3$-group whereas all other subgroups of $\SU(3)$ are at best unitary $2$-groups. Thus if we apply the same logic as before, we expect that $\Sigma(360\phi)$ plays a role in the optimal universal gate set for qutrits, but more work needs to be done to rigorously prove this claim. 

\emph{Disclaimer.--} The codes we find using our methods are in general non-additive, meaning they are not equivalent to any stabilizer code. In particular this means that the standard fault tolerant methods of decoding and correcting errors for stabilizer codes \cite{stabilizer} cannot be used. There are methods for decoding and correcting non-additive codes, for example, one can just measure the error space projectors and then undo the error with an appropriate unitary, but also more advanced techniques exist \cite{decoding1,decoding2,decoding3}. However, the jury is still out regarding how fault tolerant those non-additive decoding methods are (if at all). Our hope is that our work could be used to understand the structure of non-additive codes in more detail and help spur new research toward fully fault tolerant non-additive codes.

\section{Review of Unitary $t$-Groups}

Let $\U(q)$ be the unitary group of degree $q$. A finite subgroup $ \G $ of $ \U(q)$ is called a \textit{unitary $t$-group} \cite{designsHaarReview} if
\[
   \frac{1}{|\G|} \sum_{U \in \G} \qty( U \otimes U^*)^{\otimes t}  = \int_{\U(q)}  \qty(U \otimes U^*)^{\otimes t} \, dU, \numberthis  \label{eqn:ogdef}
\]
where the integral on the right is taken with respect to the unit-normalized Haar measure (if $ \G $ is merely a finite subset rather than a finite subgroup then this is called a unitary $ t $-design). 

On the right hand side, $U$ is a $q \times q$ matrix in the \underline{F}undamental (or defining) representation $\F$ of $\U(q)$. On the left hand side, $U$ is a $q \times q$ matrix in the restricted representation of $\F$ to $\G$, denoted by $\F^\downarrow$ or $\f$ (sometimes called the \textit{branching rules} in physics \cite{branching}). In our convention, we write representations with respect to $\U(q)$ using bold capital letters and we use the superscript $\downarrow$, or the corresponding bold lower-case letter, to denote the restriction to the finite subgroup $\G$. It follows that \cref{eqn:ogdef} is equivalent to
\[
   \frac{1}{|\G|} \sum_{g \in \G} \qty( \f \otimes \f^*)^{\otimes t}(g)  = \int_{\U(q)}  \qty(\F \otimes \F^*)^{\otimes t}(g) \, dg . \numberthis \label{eqn:repPOV}
\]
Here $\F^*$ is the dual representation of $\F$ given simply by the complex conjugate.

At this point we begin freely using concepts such as the character of a representation, inner product of characters, isotypic decomposition and isotypic projector, all of which are reviewed in the Supplemental Material \cite{supp}. 

Let $\irrep{1}$ denote the trivial irrep for both $\U(q)$ and $\G$. The character of this irrep is $1$ for all $g$. Thus we can multiply through by $ 1 $ on both sides to reveal that \cref{eqn:repPOV} is simply a projector equation:
\[
\Pi^{ (\f \otimes \f^*)^{\otimes t}}_{ \irrep{1} } = \Pi^{\qty( \F \otimes \F^*)^{\otimes t}}_{ \irrep{1} }. \numberthis
\]
That is, a unitary $t$-group is such that the projector of the $\U(q)$-representation $(\F \otimes \F^*)^{\otimes t}$ onto the trivial irrep $\irrep{1}$ must be the same as the projector of the $\G$-representation $(\f \otimes \f^*)^{\otimes t}$ onto the trivial irrep $\irrep{1}$. 

If we take the trace of both sides then we are counting the multiplicity of $\irrep{1}$ in the isotypic decomposition of the tensor product via an inner product of characters. Because $\f$ is a branched version of $\F$, and the trivial irrep cannot branch further, the multiplicity on the left is greater than or equal to the multiplicity on the right. That is, \textit{for any} subgroup $ \G $ of $ \U(q) $ we have
\[
    \expval*{1, (ff^*)^t} \geq  \expval*{1, (FF^*)^t}. \numberthis \label{eqn:easycomputation}
\]
And equality holds if and only if $\G$ is a unitary $t$-group. This is an inner product of characters where $ f $ and $ F $ denote the characters corresponding to the representations $ \f $ and $ \F $. We will continue to write the corresponding non-bold letters to represent characters.

Note that one can move characters within the inner product at the expense of a complex conjugation. Thus \cref{eqn:easycomputation} says $\expval{f^t,f^t} \geq \expval{F^t,F^t}$, or $ \lVert f^t \rVert \geq \lVert F^t \rVert $. Again equality holds if and only if $\G$ is a unitary $t$-group. Note that for a character $ f $ of a finite group it is standard to call $\expval{f,f}$ the \textit{norm}, rather than call it the norm squared, and to denote it by $ \lVert f \rVert:=\expval{f,f}$.


Now notice that $(\F \otimes \F^*)^{\otimes t}$ is a reducible $\U(q)$ representation and can be decomposed as
\[
    \qty( \F \otimes \F^*)^{\otimes t} = \bigoplus_{\irrE\in \mathcal{E}_t } (m_{\irrE} ) \, \irrE . \numberthis \label{eqn:ffdecomp}
\]
Here $\irrE$ is an ir\underline{R}ep of $ \U(q) $ and $\mathcal{E}_t$ is the set of all $ \U(q) $ irreps that appear with non-zero multiplicity $m_{\irrE}$.

As an example of the notation consider $\U(2)$. Then using \cite{LieART} we can compute: 
\begin{align}
    \qty( \irrep{2} \otimes \irrep{2}^*)^{\otimes 1} &= \irrep{1} \oplus  \irrep{3}, \\
    \qty( \irrep{2} \otimes \irrep{2}^*)^{\otimes 2} &= (2)\irrep{1} \oplus (3) \irrep{3} \oplus\irrep{5}, \\
    \qty( \irrep{2} \otimes \irrep{2}^*)^{\otimes 3} &= (5)\irrep{1} \oplus (9) \irrep{3} \oplus (5) \irrep{5} \oplus \irrep{7}.
\end{align}
Here (and elsewhere), parenthesis indicate multiplicity, for example $(2) \irrep{1}$ is shorthand for $\irrep{1} \oplus \irrep{1}$. Then we see that $\mathcal{E}_{1} = \{ \irrep{1}, \irrep{3} \},\mathcal{E}_{2} = \{ \irrep{1}, \irrep{3}, \irrep{5} \}, \mathcal{E}_{3} = \{ \irrep{1}, \irrep{3}, \irrep{5}, \irrep{7} \}$. More generally for $\U(2)$, we have $\mathcal{E}_t = \{ \irrep{1}, \irrep{3}, \irrep{5}, \cdots, \irrep{(2t+1)} \}$.

Returning to the general case $\U(q)$, we can take \cref{eqn:repPOV} and insert the decomposition from \cref{eqn:ffdecomp} to obtain
\[
  \bigoplus_{\irrE\in \mathcal{E}_t } (m_{\irrE}) \, \frac{1}{|\G|} \sum_{g \in \G} \irre(g) = \bigoplus_{\irrE\in \mathcal{E}_t } (m_{\irrE}) \, \int_{\U(q)} \irrE(g) \, dg. \numberthis
\]
Here $\irre$ denotes the restriction of $\irrE$ to $\G$. Thus we see that $\G$ is a unitary $t$-group if and only if
\[
   \frac{1}{|\G|} \sum_{g \in \G} \irre (g) =  \int_{\U(q)} \irrE(g) \, dg \qquad \forall \irrE\in \mathcal{E}_t.\numberthis
\]
Inserting the trivial character $1$ into both sides, we see that this is an equality of projectors 
 $ \Pi^{\irre}_{\irrep{1} } = \Pi^{\irrE}_{\irrep{1} }$ for all $ \irrE\in \mathcal{E}_t.$ However, notice that when $\irrE= \irrep{1}$ this equality is trivially satisfied (because $ \irrep{1}^\downarrow=\irrep{1}$). On the other hand, when $\irrE\neq \irrep{1}$ the right hand side is the zero-matrix $\bm{0}$, that is,
\[
    \Pi^{\irre}_{\irrep{1} } = \bm{0} \qquad \forall \irrE\in \mathcal{E}_t : \irrE\neq \irrep{1}. \numberthis
\]
In a similar fashion as before we can take the trace of both sides to get that \textit{for any} $\G$,
\[
    \expval{1, \irrechar } \geq 0 \quad \forall \irrE\in \mathcal{E}_t : \irrE\neq \irrep{1}, \numberthis
\]
with equality if and only if $\G$ is a unitary $t$-group. As before $\irrechar$ is the character corresponding to the representation $\irre$.

A summary of all the equivalent conditions derived above can be found in the Supplemental Material \cite{supp} but for our purposes, the most useful perspectives on unitary $ t $-groups are the two we highlight below.

\begin{lemma}\label{lem:unitarytgroup1} $\G \subset \U(q)$ is a unitary $t$-group if and only if either of the following equivalent conditions are satisfied:
\begin{enumerate}[(1)]
    \item $ \lVert f^{ t} \rVert = \lVert F^{ t} \rVert $,
    \item $\expval{1,\irrechar} = 0,  \quad \forall \irrE\in \mathcal{E}_t, \irrE\neq \irrep{1}$.
\end{enumerate}
\end{lemma}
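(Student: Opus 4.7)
The plan is to note that both equivalences have already been derived piece-by-piece in the text preceding the lemma, so the proof consists of assembling those pieces into clean biconditionals.

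\textbf{For condition (1).} I would start from \cref{eqn:repPOV}, multiply each side by the trivial character of $\U(q)$, and recognize the result as an equality of orthogonal projectors onto the trivial isotype of $(\f \otimes \f^*)^{\otimes t}$ and $(\F \otimes \F^*)^{\otimes t}$ respectively. Taking traces gives $\expval{1,(ff^*)^t} = \expval{1,(FF^*)^t}$, and moving the dual characters across the inner product at the cost of a conjugation rewrites this as $\lVert f^t \rVert = \lVert F^t \rVert$. This shows that being a unitary $t$-group implies (1). For the converse, I would use that every $\U(q)$-invariant vector is automatically $\G$-invariant, so the image of $\Pi^{(\F\otimes\F^*)^{\otimes t}}_{\irrep{1}}$ is contained in the image of $\Pi^{(\f\otimes\f^*)^{\otimes t}}_{\irrep{1}}$; condition (1) equates the traces of the two projectors, which (since one dominates the other) forces them to coincide, giving back the projector equation displayed just above \cref{eqn:easycomputation} and hence the unitary $t$-group property.

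\textbf{For condition (2).} I would substitute the isotypic decomposition of \cref{eqn:ffdecomp} into both sides of \cref{eqn:repPOV} and use orthogonality of distinct isotypic components to separate the identity into one equation per $\irrE \in \mathcal{E}_t$. For $\irrE \neq \irrep{1}$, Schur orthogonality applied to the Haar integral makes the right-hand side vanish, reducing the identity to $\Pi^{\irre}_{\irrep{1}} = \bm{0}$ for every non-trivial $\irrE \in \mathcal{E}_t$. Taking traces yields condition (2) as a necessary consequence, and conversely an orthogonal projector with zero trace is itself the zero operator, so (2) is also sufficient.

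\textbf{Main obstacle.} The only step I expect to require a word of justification is ``equal trace forces equal projector'' in the converse direction of (1), which rests on the elementary observation that restriction from $\U(q)$ to $\G$ preserves every trivial subrepresentation; in other words, the trivial $\G$-isotype contains the trivial $\U(q)$-isotype as a subspace, so the two projectors are nested and equality of their (finite) ranks is equivalent to equality of the projectors themselves. Everything else is either a direct consequence of derivations already present in the excerpt or a standard application of Schur orthogonality.
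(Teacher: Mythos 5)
Your proposal is correct and follows essentially the same route as the paper: the paper derives the lemma in the text immediately preceding it by recognizing \cref{eqn:repPOV} as the projector identity $\Pi^{(\f\otimes\f^*)^{\otimes t}}_{\irrep{1}} = \Pi^{(\F\otimes\F^*)^{\otimes t}}_{\irrep{1}}$, taking traces, and using exactly your nesting observation (the trivial irrep cannot branch further, so the $\G$-invariant multiplicity dominates the $\U(q)$-invariant multiplicity) to upgrade equality of traces to equality of projectors, and then substituting \cref{eqn:ffdecomp} to obtain the per-irrep condition (2). Your explicit remarks that the projectors are nested orthogonal projectors and that a trace-zero orthogonal projector vanishes make precise the ``equality holds if and only if'' claims the paper states more briefly.
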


For small $ t $, we can compute $ \lVert F^t \rVert $ to obtain even simpler criteria for identifying a unitary $ t $-group (c.f. \cite{GrossCharacters&tGroups}).

\begin{lemma}\label{lem:unitarytgroup2}
    Suppose $\G \subset \U(q)$. 
    \begin{enumerate}[(1)]
        \item $\G$ is a unitary $1$-group $\iff$ $ \lVert f \rVert = 1$ (i.e., $\f$ is irreducible),
        \item  $\G$ is a unitary $2$-group $\iff$ $ \lVert f^2 \rVert = 2$.
    \end{enumerate}
\end{lemma}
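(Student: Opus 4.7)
The plan is to apply Lemma~\ref{lem:unitarytgroup1}(1), which says that $\G$ is a unitary $t$-group iff $\lVert f^t\rVert = \lVert F^t\rVert$. It therefore suffices to compute $\lVert F^t\rVert$ for the fundamental representation $\F$ of $\U(q)$ at $t=1,2$, and then read off what the resulting constraint forces on $\f$.

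For part~(1), I would compute $\lVert F\rVert = \langle F,F\rangle$, which (after moving a factor across the inner product, as the excerpt notes) equals the multiplicity of $\irrep{1}$ in $\F\otimes\F^*$. Since $\F$ is irreducible as a $\U(q)$-representation, Schur orthogonality gives $\lVert F\rVert = 1$. On the $\G$-side, $\lVert f\rVert = \sum_{\irrG} m_{\irrG}^{\,2}$ where $m_{\irrG}$ is the multiplicity of the $\G$-irrep $\irrG$ in the isotypic decomposition of $\f$, a sum of non-negative integers totalling at least $1$. This sum equals $1$ precisely when a single irrep appears with multiplicity one, i.e., when $\f$ itself is irreducible. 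Combining with Lemma~\ref{lem:unitarytgroup1}(1) gives the stated equivalence.

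For part~(2), the main computation is $\lVert F^2\rVert = \langle F\otimes F^*\otimes F\otimes F^*,\, 1\rangle = \langle F\otimes F, F\otimes F\rangle$, where I use the standard trick that conjugating two factors turns the trivial-multiplicity count into an inner product of $\F\otimes\F$ with itself. The elementary $\U(q)$ decomposition $\F\otimes\F = \mathrm{Sym}^2(\F)\oplus \Lambda^2(\F)$ into two inequivalent irreducibles (valid for $q\geq 2$) together with Schur orthogonality yields $\lVert F^2\rVert = 1+1 = 2$. The $\U(2)$ computation already on display in the excerpt, where $(\irrep{2}\otimes\irrep{2}^*)^{\otimes 2} = (2)\irrep{1}\oplus (3)\irrep{3}\oplus\irrep{5}$ shows $\irrep{1}$ with multiplicity $2$, serves as a sanity check. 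Lemma~\ref{lem:unitarytgroup1}(1) then delivers the equivalence $\G$ is a unitary $2$-group iff $\lVert f^2\rVert = 2$.

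There is no real obstacle beyond recalling two elementary facts about $\U(q)$ representation theory—that the fundamental representation is irreducible and that its tensor square splits as $\mathrm{Sym}^2\oplus\Lambda^2$ into inequivalent irreducibles. The only subtlety is that $\lVert F^2\rVert = 2$ tacitly requires $q\geq 2$, which is anyway the relevant regime for qudit codes.
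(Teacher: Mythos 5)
Your proof is correct and follows the paper's overall strategy---reduce to \cref{lem:unitarytgroup1}(1) and evaluate $\lVert F^{t}\rVert$ for $t=1,2$---but you compute $\lVert F^{2}\rVert$ via a different decomposition. The paper uses $\F\otimes\F^{*}=\irrep{1}\oplus\bmsf{Ad}$ and gets $\lVert F^{2}\rVert=\lVert FF^{*}\rVert=\lVert 1+Ad\rVert=1+1=2$, with irreducibility of the adjoint supplied by the simplicity of $\SU(q)$; you instead move both conjugates to one side and split $\F\otimes\F=\mathrm{Sym}^{2}\F\oplus\Lambda^{2}\F$ into two inequivalent irreducibles. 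The two computations are equally valid and equally short. Your route has the advantage that it generalizes cleanly: $\lVert F^{t}\rVert=\dim\mathrm{End}_{\U(q)}(\F^{\otimes t})$ is, by Schur--Weyl duality, the sum of $(\dim S^{\mu})^{2}$ over partitions $\mu$ of $t$ with at most $q$ rows, which equals $t!$ once $q\geq t$ and immediately recovers the values $6$ (for $q\geq 3$) and $5$ (for $q=2$) that the paper quotes for $t=3$ right after the lemma. The paper's route, by contrast, makes explicit the conceptual point that the $t=2$ count hinges on $\SU(q)$ being simple. You are also right to flag the $q\geq 2$ caveat; the same restriction is implicit in the paper's argument, since $\bmsf{Ad}$ vanishes when $q=1$.
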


\begin{proof}
The fundamental representation $ \F $ of $ \U(q) $ is irreducible, so $  \lVert F \rVert =1 $. It is well known that when $\F$ is the fundamental representation of $\U(q)$, then $ \F \otimes \F^* = \irrep{1} \oplus \bmsf{Ad} $ where $\bmsf{Ad}$ is the adjoint irrep of $\SU(q)$. Then $ \lVert F^2 \rVert = \lVert FF^* \rVert $ which can be evaluated as $ \lVert 1+Ad \rVert = 2 $ where $Ad$ is the adjoint character. Note that we have used the irreducibility of $\bmsf{Ad}$ which is equivalent to the fact that $\SU(q)$ is a simple Lie group.
\end{proof}

Going further, $ \G \subset \U(q) $ is a $ 3 $-group if and only if   $ \lVert f^3 \rVert =6 $, for $ q \geq 3 $, or  $ \lVert f^3 \rVert =5 $, for $ q=2 $ \cite{GrossCharacters&tGroups}. The classification of unitary $ t $-groups given in \cite{tGroupsBannai} shows that no $ t $-groups exist for $ t >3 $, with the exception of the binary icosahedral subgroup $\mathrm{2I} $ of $ \U(2)$, which is the only unitary $4$-group and the only unitary $5$-group.

\section{Twisted Unitary $t$-groups}

In \cref{eqn:repPOV}, the left hand side can be thought of as a sum with respect to the uniform weight $1/|\G|$. But more generally we can take the sum $\sum_{g \in \G} W(g) \qty(\f \otimes \f^*)^{\otimes t}(g)$ with respect to any normalized weight $W$. If
\[
    \sum_{g \in \G} W(g) \qty( \f \otimes \f^*)^{\otimes t}(g)  = \int_{\U(q)}  \qty(\F \otimes \F^*)^{\otimes t}(g) \, dg , \numberthis \label{eqn:weightedtGroup}
\]
it is standard to call $\G$ a \textit{weighted $t$-group} \cite{unitary_designs_and_codes}.

Let $\logirr$ be an irrep of $\G$ with corresponding character $\lambda$. Then there is a natural weight
\[
W_{\logirr} (g) =  \frac{1}{|\G| } \lambda^*(g) \lambda(g) = \frac{1}{|\G| } |\lambda (g)|^2. \numberthis
\]
Note that because $\logirr$ is irreducible then $W_{\logirr}$ is normalized: $\sum_{g \in \G} W_{\logirr}(g) =\lVert \lambda \rVert = 1 $. And when $\logirr$ is $1$-dimensional then $W_{\logirr}(g) = 1/|\G|$ and so we recover the usual (unweighted) definition of unitary $t$-groups. When we use the weight $W_{\logirr}$ we will call $\G$ a \textit{twisted unitary $t$-group} with respect to $\logirr$ or, equivalently, a $\logirr$-twisted unitary $t$-group. We can adapt \cref{lem:unitarytgroup1} to this scenario, for a proof see \cite{supp}.

\begin{lemma}\label{lem:twistedunitarytgroup1} $\G \subset \U(q)$ is a $ \logirr $-twisted unitary $t$-group if and only if either of the following equivalent conditions are satisfied:
\begin{enumerate}[(1)]
    \item $ \lVert \lambda f^{ t} \rVert= \lVert F^{ t} \rVert $,
    \item $\expval{\lambda^* \lambda,\irrechar} = 0,  \quad \forall \irrE\in \mathcal{E}_t, \irrE\neq \irrep{1}$.
\end{enumerate}
\end{lemma}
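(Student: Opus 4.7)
The plan is to mirror the paper's argument for ordinary unitary $t$-groups, inserting the weight $W_{\logirr}$ wherever the uniform weight $1/|\G|$ appeared, and exploiting the fact that $|\lambda(g)|^2$ is itself the character of the genuine (reducible) $\G$-representation $\logirr^*\otimes\logirr$.

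First I would take the trace of the defining matrix identity \eqref{eqn:weightedtGroup}. On the left, pulling the scalar $|\lambda(g)|^2$ past the trace yields
\[
    \frac{1}{|\G|}\sum_g |\lambda(g)|^2 |f(g)|^{2t} \;=\; \expval{\lambda f^t,\lambda f^t} \;=\; \lVert \lambda f^t\rVert,
\]
while the right gives $\lVert F^t\rVert$ exactly as in the unweighted case. This delivers the ``only if'' direction of condition (1) immediately.

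For the converse and simultaneously for (2), I would establish the general inequality $\lVert\lambda f^t\rVert \ge \lVert F^t\rVert$, valid for every $\G\subset\U(q)$ and every irreducible $\logirr$, with equality precisely when (2) holds. Decomposing $(\F\otimes\F^*)^{\otimes t} = \bigoplus_{\irrE\in\mathcal{E}_t} m_{\irrE}\,\irrE$ and branching to $\G$,
\[
    \lVert\lambda f^t\rVert \;=\; \expval{\lambda^*\lambda,(ff^*)^t} \;=\; \sum_{\irrE\in\mathcal{E}_t} m_{\irrE}\,\expval{\lambda^*\lambda,\irrechar}.
\]
The $\irrE=\irrep{1}$ summand contributes exactly $m_{\irrep{1}} = \lVert F^t\rVert$, using $\expval{\lambda^*\lambda,1} = \lVert\lambda\rVert = 1$ by irreducibility of $\logirr$. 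Each remaining summand is non-negative: decomposing $\logirr^*\otimes\logirr = \bigoplus_\alpha n_\alpha \irrep{s}_\alpha$ into $\G$-irreps rewrites $\expval{\lambda^*\lambda,\irrechar}$ as $\sum_\alpha n_\alpha\,\text{mult}(\irrep{s}_\alpha,\irre)\ge 0$. Hence $\lVert\lambda f^t\rVert - \lVert F^t\rVert = \sum_{\irrE\ne\irrep{1}} m_{\irrE}\,\expval{\lambda^*\lambda,\irrechar}$, and equality is equivalent to condition (2).

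The step I expect to require the most care is lifting the scalar conditions (1) and (2) back up to the full matrix identity \eqref{eqn:weightedtGroup}. For this I would observe that, for each $\irrE\ne\irrep{1}$, the matrix $\frac{1}{|\G|}\sum_g |\lambda(g)|^2\irre(g)$ is a non-negative linear combination of the orthogonal isotypic projectors of $\irre$ (obtained by expanding $|\lambda|^2$ against its decomposition into $\G$-irreducible characters), and is therefore positive semidefinite with trace $\expval{\lambda^*\lambda,\irrechar}$; a PSD matrix of vanishing trace is zero. In the unweighted proof this step is automatic because $\frac{1}{|\G|}\sum_g\irre(g)$ is already a single isotypic projector, whereas here the argument genuinely requires that $|\lambda|^2$ be a \emph{true} character, not an arbitrary weight, for the PSD/projector decomposition to go through.
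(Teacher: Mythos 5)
Your proof is correct and follows essentially the same route as the paper's: trace the defining identity, decompose $(\F \otimes \F^*)^{\otimes t}$ over $\mathcal{E}_t$, and reduce everything to the per-irrep conditions $\expval{\lambda^*\lambda,\irrechar}=0$. The only place you go beyond the paper is in spelling out the positive-semidefiniteness argument needed to lift the scalar condition (2) back up to the matrix identity --- the paper leaves that converse step implicit --- and your remark that this step genuinely relies on $|\lambda|^2$ being a true character (so that the weighted sum is a non-negative combination of isotypic projectors) is accurate.
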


Notice that part (2) of \cref{lem:twistedunitarytgroup1} implies part (2) of \cref{lem:unitarytgroup1}, so every $ \logirr $-twisted unitary $ t $-group is also a unitary $ t $-group. To see this implication, it is enough to observe the tensor product of an irrep with its dual always contains a unique trivial subrepresentation, i.e.,  there is a unique way to write $ \logirr^* \otimes \logirr = \irrep{1} \oplus \bmsf{\xi} $ for some  representation $\bmsf{\xi}$. Thus $\expval{\lambda^* \lambda,\irrechar}=\expval{1,\irrechar}+ \expval{\xi,\irrechar} $ and so $ \expval{\lambda^* \lambda,\irrechar} = 0 $ implies $\expval{1,\irrechar} = 0 $.

We can also adapt \cref{lem:unitarytgroup2} to the $ \logirr $-twisted case.

\begin{lemma}\label{lem:twistedunitarytgroup2}
    Suppose $\G \subset \U(q)$. Let $ \logirr $ be an irrep of $ \G $.
    \begin{enumerate}[(1)]
        \item $\G$ is a $ \logirr $-twisted unitary $1$-group $\iff$  $ \lVert \lambda f \rVert = 1$ (i.e., $ \logirr \otimes \f$ is irreducible),
        \item $\G$ is a $ \logirr $-twisted unitary $2$-group $\iff$ $ \lVert \lambda f^2 \rVert = 2$.
    \end{enumerate}
\end{lemma}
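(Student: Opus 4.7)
The plan is to reduce this lemma directly to part (1) of Lemma~\ref{lem:twistedunitarytgroup1}, noting that the right-hand side $\lVert F^t \rVert$ there does not depend on the twist $\logirr$. So the computations of $\lVert F \rVert$ and $\lVert F^2 \rVert$ already carried out in the proof of Lemma~\ref{lem:unitarytgroup2} transfer verbatim, and the work reduces to interpreting the left-hand side in each case.

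For part (1), I would invoke Lemma~\ref{lem:twistedunitarytgroup1}(1) with $t=1$: $\G$ is a $\logirr$-twisted unitary $1$-group iff $\lVert \lambda f \rVert = \lVert F \rVert$. Since the fundamental representation $\F$ of $\U(q)$ is irreducible, $\lVert F \rVert = 1$, giving the condition $\lVert \lambda f \rVert = 1$. Then I would observe that $\lambda f$ is the character of $\logirr \otimes \f$ and appeal to the standard character-theoretic fact that a character of a finite group has unit norm iff its representation is irreducible, yielding the parenthetical remark.

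For part (2), I would again invoke Lemma~\ref{lem:twistedunitarytgroup1}(1), this time with $t=2$: $\G$ is a $\logirr$-twisted unitary $2$-group iff $\lVert \lambda f^2 \rVert = \lVert F^2 \rVert$. I would then compute $\lVert F^2 \rVert = 2$ by the same argument as in Lemma~\ref{lem:unitarytgroup2}: the decomposition $\F \otimes \F^* = \irrep{1} \oplus \bmsf{Ad}$ gives $\lVert F^2 \rVert = \lVert F F^* \rVert = \lVert 1 + Ad \rVert = \expval{1,1} + 2\expval{1,Ad} + \expval{Ad,Ad} = 1 + 0 + 1 = 2$, where the last equality uses that $\bmsf{Ad}$ is irreducible (equivalently, $\SU(q)$ is a simple Lie group).

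There is no real obstacle; the entire content of the lemma is packaged into Lemma~\ref{lem:twistedunitarytgroup1}, and the only mild subtlety is making explicit that since $\F$, $\bmsf{Ad}$, and the decomposition $\F \otimes \F^* = \irrep{1} \oplus \bmsf{Ad}$ are intrinsic to $\U(q)$ and do not involve $\G$ or $\logirr$, the thresholds $1$ and $2$ are the same as in the untwisted Lemma~\ref{lem:unitarytgroup2}.
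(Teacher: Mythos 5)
Your proposal is correct and matches the paper's approach: the paper gives no separate proof of this lemma, simply presenting it as the adaptation of \cref{lem:unitarytgroup2} via \cref{lem:twistedunitarytgroup1}, and your argument fills in exactly that adaptation, reusing the computations $\lVert F \rVert = 1$ and $\lVert F^2 \rVert = 2$ which are independent of the twist.
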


This lemma is the reason we have dubbed these ``twisted" unitary $t$-groups. The irrep $\logirr$ latches onto the fundamental irrep $\f$ (or powers thereof) and \textit{twists}. When $\logirr$ is the trivial irrep $\irrep{1}$ then no such twisting occurs and we reproduce the usual concept of unitary $t$-groups. A similar phenomenon can be found in twisted Gelfand pairs \cite{twisted1} and twisted wavefunctions from induced representations \cite{twisted2}.

\section{Quantum Codes}

The Knill-Laflamme (KL) conditions \cite{KL} state that a quantum code has distance $d = t + 1$ iff
\[
    \bra{\psi} E \ket{\phi} = c_E \braket{\psi}{\phi}, \numberthis
\]
for all errors $E$ of weight $t$ or less and all codewords $ \ket{\psi}, \ket{\phi} $. In other words, we can correct $\lfloor t/2 \rfloor$ errors or detect $t$ errors. \textit{Achtung}! Usually $t$ denotes the number of errors that can be \textit{corrected}, but in order for us to preserve the $t$ in unitary $t$-design we had to resort to a convention where $t$ denotes the number of errors that can be \textit{detected}.

Consider a finite subgroup $\G$ of $\U(q)$. Suppose all codewords $\ket{\psi}$ transform in an irrep $\logirr$ of $\G$ (where we think of \underline{L}ambda as the irrep of the \underline{L}ogical codespace). In particular the codespace will have dimension $ |\logirr| $, the dimension of the irrep $ \logirr$. And suppose some error $E$ transforms in a representation $\irrE$ of $\U(q)$, which branches to the representation $ \irre $ of $ \G $. Then the tensor product $\bra{\psi} \otimes E \otimes \ket{\phi}$ transforms in the representation $\logirr^* \otimes \irre  \otimes \logirr$ of $\G$. The contraction of any tensor is always an invariant, thus $\bra{\psi} E \ket{\phi}$ is an invariant and so must transform in the trivial representation $\irrep{1}$ or the null representation $\bm{0}$.

But if $\logirr^* \otimes \irre  \otimes \logirr$ does not contain a copy of the trivial representation $\irrep{1}$ in its isotypic decomposition (i.e., $\expval{1, \lambda^* \irrechar \lambda} = 0$), then the only option is that $\bra{\psi} E \ket{\phi} = 0$ and thus the KL condition is satisfied for the error $E$. We have proven the following lemma. 

\begin{lemma} \label{lem:codesnontrivial} Suppose a code transforms in an irrep $\logirr$ of $\G$ and an error $E$ transforms in an irrep $\irrE$ of $\U(q)$. If $\expval{1,\lambda^* \irrechar \lambda} = 0$ then the KL conditions are automatically satisfied for the error $E$. 
\end{lemma}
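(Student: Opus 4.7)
The plan is to realize the scalar $\bra{\psi} E \ket{\phi}$ as the value of a $\G$-equivariant linear map applied to the tensor $\bra{\psi} \otimes E \otimes \ket{\phi}$, and then use the hypothesis to force that map to vanish identically. First I would unpack the representation-theoretic setup. The assumption that codewords transform in the irrep $\logirr$ of $\G$ means the codespace is a $\G$-invariant subspace on which $\G$ acts via $\logirr$, so $\ket{\phi}$ lies in a copy of $\logirr$ and $\bra{\psi}$ in the dual $\logirr^*$. Similarly, the assumption that $E$ transforms in the $\U(q)$-irrep $\irrE$ places $E$ inside a copy of the restricted $\G$-representation $\irre$. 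Hence $\bra{\psi} \otimes E \otimes \ket{\phi}$ lives in the $\G$-representation $\logirr^* \otimes \irre \otimes \logirr$, whose character is $\lambda^* \irrechar \lambda$.

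Next I would observe that the evaluation map
\[
\Phi : \logirr^* \otimes \irre \otimes \logirr \to \mathbb{C}, \qquad \bra{\psi} \otimes E \otimes \ket{\phi} \mapsto \bra{\psi} E \ket{\phi}
\]
is a $\G$-equivariant linear map into the trivial representation $\irrep{1}$, because applying the same $g \in \G$ simultaneously to the bra, the operator, and the ket leaves the matrix element unchanged (the three actions cancel). By standard character theory, the space of $\G$-equivariant maps into $\irrep{1}$ has dimension equal to the multiplicity of $\irrep{1}$ in the source, namely $\expval{1, \lambda^* \irrechar \lambda}$. Under the stated hypothesis this multiplicity is zero, so $\Phi$ is identically zero. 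In particular $\bra{\psi} E \ket{\phi} = 0$ for all codewords $\ket{\psi}, \ket{\phi}$, which is the KL condition with constant $c_E = 0$.

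The only conceptual subtlety is verifying that the three $\G$-actions genuinely combine into the tensor representation $\logirr^* \otimes \irre \otimes \logirr$, i.e., that the codespace is a true linear $\G$-subrepresentation of the physical Hilbert space and that the $\G$-action on the operator algebra that places $E$ inside a copy of $\irre$ is the one induced by conjugation from the restricted $\U(q)$-irrep $\irrE$. Once this bookkeeping is granted, the proof reduces to the standard fact that a homomorphism from a representation with no trivial isotypic component into the trivial representation must vanish; no further computation is needed.
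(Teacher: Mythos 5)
Your proof is correct and follows essentially the same route as the paper: both identify $\bra{\psi} E \ket{\phi}$ as an invariant pairing on the $\G$-representation $\logirr^* \otimes \irre \otimes \logirr$ and conclude that it must vanish when $\expval{1,\lambda^* \irrechar \lambda} = 0$. Your phrasing via the $\G$-equivariant evaluation map and the dimension of the space of equivariant maps into $\irrep{1}$ is simply a more formal rendering of the paper's observation that the contraction of a tensor is always an invariant and hence lies in the (here empty) trivial isotypic component.
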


Notice that \cref{lem:codesnontrivial} does not apply to $\irrE=\irrep{1}$ because $\expval{1, \lambda^*1\lambda}= \lVert \lambda \rVert \neq 0 $.  However, in the $\irrE=\irrep{1}$ case the KL conditions are also automatically satisfied. The idea is that if $E$ transforms trivially with respect to $\G$ then $ E $ commutes with the action of $ \G $ and thus, by Schur's lemma, $ E $ acts proportional to the identity on irreps of $\G$, i.e., $E \ket{\phi} = c_E \ket{\phi}$. Then $\bra{\psi} E \ket{\phi} = c_E \braket{\psi}{\phi}$. Note that here the KL condition may be satisfied in a degenerate manner, $ c_E \neq 0 $, whereas for the errors in \cref{lem:codesnontrivial} we always have $ c_E=0 $. This proves the following lemma.

\begin{lemma}\label{lem:codestrivial} If a code transforms in an irrep $\logirr$ of $\G$ and if an error $E$ transforms in the trivial irrep $\irrep{1}$ then the KL conditions are automatically satisfied for the error $E$. 
\end{lemma}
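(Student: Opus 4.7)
The plan is to formalize the Schur's lemma sketch given in the paragraph preceding the lemma. The first step is to translate the statement ``the error $E$ transforms in the trivial irrep $\irrep{1}$'' into an explicit commutation relation. Since the trivial representation is one-dimensional and acts as multiplication by $1$, the conjugation action of $\G$ must fix $E$, i.e., $\f^{\otimes n}(g)\, E\, \f^{\otimes n}(g)^\dagger = E$ for every $g \in \G$. Equivalently, $E$ commutes with the physical action of $\G$ on the full $n$-qudit Hilbert space.

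Next, I would sandwich $E$ between the projector $P_{\Clog}$ onto the codespace $\Clog$. Because codewords span the irrep $\logirr$ of $\G$, the codespace is a $\G$-invariant subspace, and its projector $P_{\Clog}$ commutes with the $\G$-action. Consequently, $P_{\Clog} E P_{\Clog}$ also commutes with the $\G$-action and, when viewed as a linear map $\Clog \to \Clog$, it intertwines $\logirr$ with itself. Schur's lemma applied to the irrep $\logirr$ then forces
\[
P_{\Clog} E P_{\Clog} = c_E\, P_{\Clog}
\]
for some scalar $c_E \in \mathbb{C}$. Finally, for any two codewords $\ket{\psi},\ket{\phi} \in \Clog$ we have $P_{\Clog}\ket{\psi} = \ket{\psi}$ and $P_{\Clog}\ket{\phi} = \ket{\phi}$, so
\[
\bra{\psi} E \ket{\phi} \;=\; \bra{\psi} P_{\Clog} E P_{\Clog} \ket{\phi} \;=\; c_E \braket{\psi}{\phi},
\]
which is the KL condition for $E$.

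The only delicate step, and thus the main obstacle, is the first one: bridging the paper's tensorial convention (in which $\bra{\psi}\otimes E \otimes \ket{\phi}$ transforms under $\logirr^* \otimes \irre \otimes \logirr$) with the operator-theoretic statement that $E$ lies in the $\irrep{1}$-isotypic component of the conjugation action, and hence commutes with every $\f^{\otimes n}(g)$. Once this identification is in place, the argument reduces to a textbook application of Schur's lemma, and the scalar $c_E$ in the statement of the lemma is exactly the Schur scalar produced above, which in general need not vanish.
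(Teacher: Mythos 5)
Your proof is correct and takes essentially the same route as the paper: both arguments reduce to the observation that an operator transforming trivially commutes with the $\G$-action and then invoke Schur's lemma. In fact your compression step $P_{\Clog} E P_{\Clog} = c_E P_{\Clog}$ is slightly more careful than the paper's one-line sketch, which asserts $E\ket{\phi} = c_E\ket{\phi}$ outright: when $\logirr$ occurs with multiplicity greater than one in $\f^{\otimes n}$ (the generic case for the paper's codes), commuting with the $\G$-action only forces $E$ to act as an intertwiner that may map the chosen codespace into a different copy of $\logirr$, but since $\bra{\psi}E\ket{\phi}$ is exactly a matrix element of the compression, your version delivers the KL condition without that gap.
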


Using these two lemmas we derive our main result.

\begin{theorem}\label{thm:APP} 
If $\G$ is a $\logirr$-twisted unitary $t$-group then every subspace of $\f^{\otimes n}$ that transforms in $\logirr$ is a $|\logirr|$-dimensional quantum code with distance $d\geq t+1$ and transversal gate group $\Glog = \logirr(\G)$. 
\end{theorem}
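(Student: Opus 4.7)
The theorem makes three claims: the codespace has dimension $|\logirr|$, it carries a transversal gate group $\Glog = \logirr(\G)$, and it has distance $d \geq t+1$. The plan is to dispatch the first two immediately from the definition of the code and then devote the main effort to the distance via the Knill--Laflamme (KL) conditions, bootstrapping off of \cref{lem:codesnontrivial} and \cref{lem:codestrivial}.

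For dimension and transversality I would just unwind the construction: choosing a subspace of $\f^{\otimes n}$ on which $\G$ acts via $\logirr$ forces the codespace to have dimension $|\logirr|$, and the physical operator $g^{\otimes n}$ preserves that $\logirr$-isotypic subspace and acts on it as $\logirr(g)$. Hence $\logirr(\G)=\Glog$ is transversal.

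The heart of the proof is the distance. My plan is a representation-theoretic reduction. Any error $E$ of weight $s\le t$ can be written, after padding with identities on $t-s$ extra qudits, as an operator supported on a block of exactly $t$ physical qudits; the padding is representation-theoretically free since the identity spans the trivial subrepresentation $\irrep{1}\subset \f\otimes\f^*$. Under the diagonal conjugation $E\mapsto g^{\otimes n} E g^{-\otimes n}$ this block transforms in the $\G$-representation $(\f\otimes\f^*)^{\otimes t}$. Pulling back the $\U(q)$-isotypic decomposition from \cref{eqn:ffdecomp}, I would write $E=\sum_{\irrE\in\mathcal{E}_t} E_{\irrE}$ with each component transforming in the single $\U(q)$-irrep $\irrE$ (and hence in the restricted $\G$-representation $\irre$). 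By linearity of the bilinear form $\bra{\psi}E\ket{\phi}$, it suffices to check KL on each $E_{\irrE}$ separately.

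The case analysis is then short. For $\irrE=\irrep{1}$, \cref{lem:codestrivial} applies directly. For each $\irrE\neq\irrep{1}$, I would invoke \cref{lem:codesnontrivial}; its hypothesis $\expval{1,\lambda^*\irrechar\lambda}=0$ is the same (after moving the real-valued character $|\lambda|^2$ across the inner product) as $\expval{\lambda^*\lambda,\irrechar}=0$, which is exactly part (2) of \cref{lem:twistedunitarytgroup1} under the assumption that $\G$ is a $\logirr$-twisted unitary $t$-group. The main obstacle in the whole argument is the representation-theoretic setup of the preceding paragraph---identifying weight-$\le t$ errors with elements of $(\f\otimes\f^*)^{\otimes t}$ and decomposing them over $\mathcal{E}_t$; once this structure is in hand, the conclusion is essentially a formality.
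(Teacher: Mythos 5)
Your proposal is correct and follows essentially the same route as the paper: decompose a weight-$\le t$ error over the $\U(q)$-isotypic components $E_{\irrE}$, $\irrE\in\mathcal{E}_t$, kill the nontrivial components via \cref{lem:codesnontrivial} together with part (2) of \cref{lem:twistedunitarytgroup1}, handle $\irrE=\irrep{1}$ via \cref{lem:codestrivial}, and read off dimension and transversality from the fact that $g^{\otimes n}$ acts as $\logirr(g)$ on a $\logirr$-subrepresentation. Your explicit padding argument for weight-$s<t$ errors just makes precise a step the paper leaves implicit.
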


\begin{proof}
Recall that $\mathcal{E}_t$ is the set of all irreps in the isotypic decomposition of $(\F \otimes \F^*)^{\otimes t}$. So any error $ E $ of weight $ t $ or less can be decomposed as $E= \sum_{\irrE \in \mathcal{E}_t } E_{\irrE} $ where $ E_{\irrE} $ is $ E $ projected onto the $ \irrE $-isotypic subspace of $ (\F \otimes \F^{*})^{\otimes t} $. 
By \cref{lem:twistedunitarytgroup1} we have that $\expval{\lambda^*   \lambda, \irrechar } = 0$, and equivalently $\expval{1,\lambda^* \irrechar  \lambda} = 0$, for every $\irrE \in \mathcal{E}_t, \irrE \neq \irrep{1}$. 
So we can apply \cref{lem:codesnontrivial} to conclude that $ \bra{\psi} E_{\irrE} \ket{\phi}=0 $ for every $\irrE \in \mathcal{E}_t, \irrE \neq \irrep{1}$. We are left with 
\[
\bra{\psi} E \ket{\phi}= \sum_{\irrE \in \mathcal{E}_t } \bra{\psi}  E_{\irrE} \ket{\phi}   =\bra{\psi} E_{\irrep{1}} \ket{\phi}=c_{E_{\irrep{1}}} \braket{\psi}{\phi}, \numberthis
\]
where the final equality follows from \cref{lem:codestrivial}.

 When $ \logirr $ is a subrepresentation of $ \f^{\otimes n} $ then the transversal gate $\f^{\otimes n}(g)=g^{\otimes n}$ is a logical gate implementing $\logirr(g)$ on the codespace. Thus all gates from $\Glog:= \logirr(\G)$, the image of the representation $ \logirr $, can be implemented transversally. 
\end{proof}

In the two examples below, the way that we use this theorem is via \cref{lem:twistedunitarytgroup2}. That is, we simply check if $ \lVert \lambda f \rVert =1$ or $ \lVert \lambda f^2 \rVert = 2$ to see if $\G$ is a $\logirr$-twisted unitary $1$-group or a $\logirr$-twisted unitary $2$-group respectively. For higher orders of $t$, one must invoke \cref{lem:twistedunitarytgroup1} and use branching rules $\U(q) \downarrow \G$ (see the Supplemental Material \cite{supp}).

A good heuristic when applying this theorem is to pick a ``large" $\G$ inside of $\U(q)$ and pick a $\logirr$ which is either faithful or almost faithful (i.e., the kernel of the representation $\logirr$ should be small). That way the image $\Glog=\logirr(\G)$ is a very large transversal gate group. From this perspective, \cref{thm:APP} is a method to construct \textit{designer} quantum codes having a certain transversal gate group (c.f. \cite{us2}).

\section{Example 1: $\ico$ Qubit Codes} Let us apply our theorem to the binary icosahedral group $\G = \ico$ in $\U(2)$, which is a unitary $5$-group. The character table can be found in the Supplemental Material \cite{supp} which is taken directly from GAP \cite{GAP} as \texttt{PerfectGroup(120)}. GAP labels the irreps as $\irrGAP_i$ where $i$ ranges between $1$ and $9$. There are two $2$-dimensional irreps, $\irrGAP_2$ and $\irrGAP_3$, and we will take $\f = \irrGAP_2$ as our fundamental irrep. 

Suppose our code transforms in the other $2$-dimensional irrep: $\logirr = \irrGAP_3$. Then one can check in GAP that  $ \lVert \lambda f \rVert = 1$ and $ \lVert \lambda f^2 \rVert = 2$ (we provide a code snippet in the Supplemental Material \cite{supp}). Using \cref{lem:twistedunitarytgroup2} we see that $\ico$ is a $\irrGAP_3$-twisted unitary $2$-group. So by \cref{thm:APP}, any subspace that transforms in $\irrGAP_3$ will be a code with distance $d = 3$ and will implement $\ico$ transversally. This supersedes the codes found in \cite{us1}.

As a canonical example, suppose we encode a qubit, transforming in $\irrGAP_3$, into an $n$-th tensor power of qubits, transforming in $\irrGAP_2^{\otimes n}$. In order for a code to be present, we need the multiplicity of $\irrGAP_3$ in $\irrGAP_2^{\otimes n}$ to be at least 1, i.e., we need to find $n$ such that $\expval{\irrGAPchar_3, \irrGAPchar_2^n} > 0$. The smallest code occurs when $n = 7$ and the multiplicity is $\expval{\irrGAPchar_3, (\irrGAPchar_2)^7}=1$, meaning that this code is unique (for more on unique codes see the Supplemental Material \cite{supp}). Since $\irrGAP_3$ only occurs in odd tensor powers the next smallest code occurs when $n = 9$, and the multiplicity is $8$. This means there is a $\mathbb{C}P^7$-moduli space worth of non-equivalent codes (see Supplemental Material \cite{supp}). Going further, there is a $\mathbb{C}P^{43}$-moduli space of codes in $n = 11$ qubits and a $\mathbb{C}P^{208}$-moduli space of codes in $n = 13$ qubits. In fact there are $\irrGAP_3$ codes for all odd $n \geq 7$.

\section{Example 2: $\Sigma(360\phi)$ Qutrit Codes}

As another example consider $\G = \Sigma(360\phi)$ in $\U(3)$, a unitary 3-group that appears in the high-energy physics literature \cite{SU3Fairbairn1964FiniteAD,SU3Ludl_2011,SUtree}. The character table, a GAP code snippet, and branching rules can be found in the Supplemental Material \cite{supp}. The irrep $\irrGAP_1 = \irrep{1}$ is the trivial irrep and there are four different 3-dimensional irreps, labeled $\irrGAP_2$, $\irrGAP_3$, $\irrGAP_4$, and $\irrGAP_5$. We will take the fundamental irrep to be $\f = \irrGAP_2$. 

Suppose our code transforms as $\logirr = \irrGAP_3$. Then one can check in GAP that $ \lVert  \irrGAP_3 \irrGAP_2 \rVert = 1$. Thus by \cref{lem:twistedunitarytgroup2}, $\Sigma(360\phi)$ is a $\irrGAP_3$-twisted unitary $1$-group. So by \cref{thm:APP}, any subspace that transforms in $\irrGAP_3$ will be a code with distance $d = 2$ and will implement $\Sigma(360\phi)$ transversally. Let us find a qutrit code transforming in $\irrGAP_3$ within a tensor product of $n$ qutrits $\f^{\otimes n}$. Again we simply look for $n$ such that $\expval{\irrGAPchar_3,\irrGAPchar_2^n} > 0$. Thus we have proven that there are codes in $n = 7, 10, 13, 16, 19, \cdots$, i.e., whenever $n \equiv 1 \mod{3}$ for $n \geq 7$. The smallest code transforming in $ \irrGAP_3 $ encodes 1 qutrit into $7$ qutrits, detects any single error, and transversally implements any gate from $\Sigma(360\phi)$.  

However, unlike $\ico$, for $\Sigma(360\phi)$ there is also another good logical irrep. Let $\logirr = \irrGAP_4$. One can check that $\Sigma(360\phi)$ is a $\irrGAP_4$-twisted unitary 1-group and there are codes whenever $n \equiv 2 \mod{3}$ and $n \geq 5$. The smallest code here is actually better, it occurs when $n = 5$ and the multiplicity is $\expval{\irrGAPchar_4, (\irrGAPchar_2)^5}=1$, meaning that this code is unique (for more on the history of this code \cite{sigma360Ian,sigma360Victor} and unique codes in general see the Supplemental Material \cite{supp}). All the codes in this family encode 1 qutrit into $ n $ qutrits and detect any single error while implementing $\Sigma(360\phi)$ transversally.

\section{Conclusion}

This paper establishes a novel and significant connection between quantum $t$-designs, specifically twisted unitary $t$-groups, and quantum error-correcting codes. By introducing twisted unitary $t$-groups, which generalize unitary $t$-groups through the incorporation of irreducible representations, we provide a framework for constructing quantum codes with many transversal gates, which naturally do not spread errors and thus are useful for fault tolerance.  Two illustrative examples involving the unitary $5$-group $ \ico $ in $ \SU(2) $ and the unitary 3-group $\Sigma(360\phi)$ in $ \SU(3) $ highlight the practicality and versatility of our approach, yielding $n$-qubit and $n$-qutrit quantum codes with impressive transversal gates. Both of these codes have transversal gate groups which are maximal, lacking only a single gate outside of the respective groups $ \ico $ and $\Sigma(360\phi)$ to achieve universality. 

It is the hope of the authors that this work, on top of previous work on quantum error correcting codes outside the stabilizer framework \cite{nonadditive,nonadditive2,nonadditive3,perminvOuyang,barg,us1,us2}, will spur a robust inquiry into quantum circuits to implement error correction, fault tolerant measurements, fault tolerant gates, and general fault tolerant circuit design, all for nonadditive codes.

\section{Acknowledgments} 

We thank Markus Heinrich for first introducing us to unitary $t$-groups and we thank Michael Gullans and Victor V. Albert for helpful conversations regarding unitary $t$-designs and code finding. We thank Anthony Leverrier for thoroughly reviewing a previous version of the manuscript and finding an error. We also thank Brad Horner from Mathematics Stack Exchange for suggesting the proof in the last section of the Supplemental Material. This research was supported in part by the MathQuantum RTG through the NSF RTG grant DMS-2231533.

\nocite{2004permutation,gross1}
\bibliography{biblio}

\begin{thebibliography}{45}%
\makeatletter
\providecommand \@ifxundefined [1]{%
 \@ifx{#1\undefined}
}%
\providecommand \@ifnum [1]{%
 \ifnum #1\expandafter \@firstoftwo
 \else \expandafter \@secondoftwo
 \fi
}%
\providecommand \@ifx [1]{%
 \ifx #1\expandafter \@firstoftwo
 \else \expandafter \@secondoftwo
 \fi
}%
\providecommand \natexlab [1]{#1}%
\providecommand \enquote  [1]{``#1''}%
\providecommand \bibnamefont  [1]{#1}%
\providecommand \bibfnamefont [1]{#1}%
\providecommand \citenamefont [1]{#1}%
\providecommand \href@noop [0]{\@secondoftwo}%
\providecommand \href [0]{\begingroup \@sanitize@url \@href}%
\providecommand \@href[1]{\@@startlink{#1}\@@href}%
\providecommand \@@href[1]{\endgroup#1\@@endlink}%
\providecommand \@sanitize@url [0]{\catcode `\\12\catcode `\$12\catcode `\&12\catcode `\#12\catcode `\^12\catcode `\_12\catcode `\%12\relax}%
\providecommand \@@startlink[1]{}%
\providecommand \@@endlink[0]{}%
\providecommand \url  [0]{\begingroup\@sanitize@url \@url }%
\providecommand \@url [1]{\endgroup\@href {#1}{\urlprefix }}%
\providecommand \urlprefix  [0]{URL }%
\providecommand \Eprint [0]{\href }%
\providecommand \doibase [0]{https://doi.org/}%
\providecommand \selectlanguage [0]{\@gobble}%
\providecommand \bibinfo  [0]{\@secondoftwo}%
\providecommand \bibfield  [0]{\@secondoftwo}%
\providecommand \translation [1]{[#1]}%
\providecommand \BibitemOpen [0]{}%
\providecommand \bibitemStop [0]{}%
\providecommand \bibitemNoStop [0]{.\EOS\space}%
\providecommand \EOS [0]{\spacefactor3000\relax}%
\providecommand \BibitemShut  [1]{\csname bibitem#1\endcsname}%
\let\auto@bib@innerbib\@empty
\bibitem [{\citenamefont {Conway}\ and\ \citenamefont {A.}(2008)}]{ConwaySloane}%
  \BibitemOpen
  \bibfield  {author} {\bibinfo {author} {\bibfnamefont {J.~H.}\ \bibnamefont {Conway}}\ and\ \bibinfo {author} {\bibfnamefont {S.~N.~J.}\ \bibnamefont {A.}},\ }\href@noop {} {\emph {\bibinfo {title} {Sphere packings, lattices, and groups}}}\ (\bibinfo  {publisher} {World Publishing Corp.},\ \bibinfo {year} {2008})\BibitemShut {NoStop}%
\bibitem [{\citenamefont {Scott}(2008)}]{tomography1}%
  \BibitemOpen
  \bibfield  {author} {\bibinfo {author} {\bibfnamefont {A.~J.}\ \bibnamefont {Scott}},\ }\bibfield  {title} {\bibinfo {title} {Optimizing quantum process tomography with unitary 2-designs},\ }\href {https://doi.org/10.1088/1751-8113/41/5/055308} {\bibfield  {journal} {\bibinfo  {journal} {Journal of Physics A: Mathematical and Theoretical}\ }\textbf {\bibinfo {volume} {41}},\ \bibinfo {pages} {055308} (\bibinfo {year} {2008})}\BibitemShut {NoStop}%
\bibitem [{\citenamefont {Kimmel}\ and\ \citenamefont {Liu}(2017)}]{tomography2}%
  \BibitemOpen
  \bibfield  {author} {\bibinfo {author} {\bibfnamefont {S.}~\bibnamefont {Kimmel}}\ and\ \bibinfo {author} {\bibfnamefont {Y.-K.}\ \bibnamefont {Liu}},\ }\bibfield  {title} {\bibinfo {title} {Phase retrieval using unitary 2-designs},\ }in\ \href {https://doi.org/10.1109/sampta.2017.8024414} {\emph {\bibinfo {booktitle} {2017 International Conference on Sampling Theory and Applications (SampTA)}}}\ (\bibinfo  {publisher} {IEEE},\ \bibinfo {year} {2017})\BibitemShut {NoStop}%
\bibitem [{\citenamefont {Nakata}\ \emph {et~al.}(2021)\citenamefont {Nakata}, \citenamefont {Zhao}, \citenamefont {Okuda}, \citenamefont {Bannai}, \citenamefont {Suzuki}, \citenamefont {Tamiya}, \citenamefont {Heya}, \citenamefont {Yan}, \citenamefont {Zuo}, \citenamefont {Tamate}, \citenamefont {Tabuchi},\ and\ \citenamefont {Nakamura}}]{benchmarking1}%
  \BibitemOpen
  \bibfield  {author} {\bibinfo {author} {\bibfnamefont {Y.}~\bibnamefont {Nakata}}, \bibinfo {author} {\bibfnamefont {D.}~\bibnamefont {Zhao}}, \bibinfo {author} {\bibfnamefont {T.}~\bibnamefont {Okuda}}, \bibinfo {author} {\bibfnamefont {E.}~\bibnamefont {Bannai}}, \bibinfo {author} {\bibfnamefont {Y.}~\bibnamefont {Suzuki}}, \bibinfo {author} {\bibfnamefont {S.}~\bibnamefont {Tamiya}}, \bibinfo {author} {\bibfnamefont {K.}~\bibnamefont {Heya}}, \bibinfo {author} {\bibfnamefont {Z.}~\bibnamefont {Yan}}, \bibinfo {author} {\bibfnamefont {K.}~\bibnamefont {Zuo}}, \bibinfo {author} {\bibfnamefont {S.}~\bibnamefont {Tamate}}, \bibinfo {author} {\bibfnamefont {Y.}~\bibnamefont {Tabuchi}},\ and\ \bibinfo {author} {\bibfnamefont {Y.}~\bibnamefont {Nakamura}},\ }\bibfield  {title} {\bibinfo {title} {Quantum circuits for exact unitary $t$-designs and applications to higher-order randomized benchmarking},\ }\href {https://doi.org/10.1103/PRXQuantum.2.030339} {\bibfield  {journal} {\bibinfo  {journal} {PRX Quantum}\
  }\textbf {\bibinfo {volume} {2}},\ \bibinfo {pages} {030339} (\bibinfo {year} {2021})}\BibitemShut {NoStop}%
\bibitem [{\citenamefont {Ambainis}\ \emph {et~al.}(2009)\citenamefont {Ambainis}, \citenamefont {Bouda},\ and\ \citenamefont {Winter}}]{crypto1}%
  \BibitemOpen
  \bibfield  {author} {\bibinfo {author} {\bibfnamefont {A.}~\bibnamefont {Ambainis}}, \bibinfo {author} {\bibfnamefont {J.}~\bibnamefont {Bouda}},\ and\ \bibinfo {author} {\bibfnamefont {A.}~\bibnamefont {Winter}},\ }\bibfield  {title} {\bibinfo {title} {Nonmalleable encryption of quantum information},\ }\bibfield  {journal} {\bibinfo  {journal} {Journal of Mathematical Physics}\ }\textbf {\bibinfo {volume} {50}},\ \href {https://doi.org/10.1063/1.3094756} {10.1063/1.3094756} (\bibinfo {year} {2009})\BibitemShut {NoStop}%
\bibitem [{\citenamefont {Roberts}\ and\ \citenamefont {Yoshida}(2017)}]{chaos1}%
  \BibitemOpen
  \bibfield  {author} {\bibinfo {author} {\bibfnamefont {D.~A.}\ \bibnamefont {Roberts}}\ and\ \bibinfo {author} {\bibfnamefont {B.}~\bibnamefont {Yoshida}},\ }\bibfield  {title} {\bibinfo {title} {Chaos and complexity by design},\ }\bibfield  {journal} {\bibinfo  {journal} {Journal of High Energy Physics}\ }\textbf {\bibinfo {volume} {2017}},\ \href {https://doi.org/10.1007/jhep04(2017)121} {10.1007/jhep04(2017)121} (\bibinfo {year} {2017})\BibitemShut {NoStop}%
\bibitem [{\citenamefont {Assmus}\ and\ \citenamefont {Mattson}(1969)}]{Assmus_Mattson_1969}%
  \BibitemOpen
  \bibfield  {author} {\bibinfo {author} {\bibfnamefont {E.}~\bibnamefont {Assmus}}\ and\ \bibinfo {author} {\bibfnamefont {H.}~\bibnamefont {Mattson}},\ }\bibfield  {title} {\bibinfo {title} {New 5-designs},\ }\href {https://doi.org/10.1016/s0021-9800(69)80115-8} {\bibfield  {journal} {\bibinfo  {journal} {Journal of Combinatorial Theory}\ }\textbf {\bibinfo {volume} {6}},\ \bibinfo {pages} {122–151} (\bibinfo {year} {1969})}\BibitemShut {NoStop}%
\bibitem [{\citenamefont {Roy}\ and\ \citenamefont {Scott}(2009)}]{unitary_designs_and_codes}%
  \BibitemOpen
  \bibfield  {author} {\bibinfo {author} {\bibfnamefont {A.}~\bibnamefont {Roy}}\ and\ \bibinfo {author} {\bibfnamefont {A.~J.}\ \bibnamefont {Scott}},\ }\bibfield  {title} {\bibinfo {title} {Unitary designs and codes},\ }\href {https://doi.org/10.1007/s10623-009-9290-2} {\bibfield  {journal} {\bibinfo  {journal} {Designs, Codes and Cryptography}\ }\textbf {\bibinfo {volume} {53}},\ \bibinfo {pages} {13–31} (\bibinfo {year} {2009})}\BibitemShut {NoStop}%
\bibitem [{\citenamefont {Gross}\ \emph {et~al.}(2007)\citenamefont {Gross}, \citenamefont {Audenaert},\ and\ \citenamefont {Eisert}}]{GrossCharacters&tGroups}%
  \BibitemOpen
  \bibfield  {author} {\bibinfo {author} {\bibfnamefont {D.}~\bibnamefont {Gross}}, \bibinfo {author} {\bibfnamefont {K.}~\bibnamefont {Audenaert}},\ and\ \bibinfo {author} {\bibfnamefont {J.}~\bibnamefont {Eisert}},\ }\bibfield  {title} {\bibinfo {title} {Evenly distributed unitaries: On the structure of unitary designs},\ }\bibfield  {journal} {\bibinfo  {journal} {Journal of Mathematical Physics}\ }\textbf {\bibinfo {volume} {48}},\ \href {https://doi.org/10.1063/1.2716992} {10.1063/1.2716992} (\bibinfo {year} {2007})\BibitemShut {NoStop}%
\bibitem [{\citenamefont {Bannai}\ \emph {et~al.}(2019)\citenamefont {Bannai}, \citenamefont {Nakahara}, \citenamefont {Zhao},\ and\ \citenamefont {Zhu}}]{rho_designs}%
  \BibitemOpen
  \bibfield  {author} {\bibinfo {author} {\bibfnamefont {E.}~\bibnamefont {Bannai}}, \bibinfo {author} {\bibfnamefont {M.}~\bibnamefont {Nakahara}}, \bibinfo {author} {\bibfnamefont {D.}~\bibnamefont {Zhao}},\ and\ \bibinfo {author} {\bibfnamefont {Y.}~\bibnamefont {Zhu}},\ }\bibfield  {title} {\bibinfo {title} {On the explicit constructions of certain unitary t-designs},\ }\href {https://doi.org/10.1088/1751-8121/ab5009} {\bibfield  {journal} {\bibinfo  {journal} {Journal of Physics A: Mathematical and Theoretical}\ }\textbf {\bibinfo {volume} {52}},\ \bibinfo {pages} {495301} (\bibinfo {year} {2019})}\BibitemShut {NoStop}%
\bibitem [{\citenamefont {Bannai}\ \emph {et~al.}(2020)\citenamefont {Bannai}, \citenamefont {Nakata}, \citenamefont {Okuda},\ and\ \citenamefont {Zhao}}]{bannai2020}%
  \BibitemOpen
  \bibfield  {author} {\bibinfo {author} {\bibfnamefont {E.}~\bibnamefont {Bannai}}, \bibinfo {author} {\bibfnamefont {Y.}~\bibnamefont {Nakata}}, \bibinfo {author} {\bibfnamefont {T.}~\bibnamefont {Okuda}},\ and\ \bibinfo {author} {\bibfnamefont {D.}~\bibnamefont {Zhao}},\ }\href@noop {} {\bibinfo {title} {Explicit construction of exact unitary designs}} (\bibinfo {year} {2020}),\ \Eprint {https://arxiv.org/abs/2009.11170} {arXiv:2009.11170 [math.CO]} \BibitemShut {NoStop}%
\bibitem [{\citenamefont {Fairbairn}\ \emph {et~al.}(1964)\citenamefont {Fairbairn}, \citenamefont {Fulton},\ and\ \citenamefont {Klink}}]{SU3Fairbairn1964FiniteAD}%
  \BibitemOpen
  \bibfield  {author} {\bibinfo {author} {\bibfnamefont {W.}~\bibnamefont {Fairbairn}}, \bibinfo {author} {\bibfnamefont {T.}~\bibnamefont {Fulton}},\ and\ \bibinfo {author} {\bibfnamefont {W.~H.}\ \bibnamefont {Klink}},\ }\bibfield  {title} {\bibinfo {title} {Finite and disconnected subgroups of {SU}3 and their application to the elementary‐particle spectrum},\ }\href {https://api.semanticscholar.org/CorpusID:123080515} {\bibfield  {journal} {\bibinfo  {journal} {Journal of Mathematical Physics}\ }\textbf {\bibinfo {volume} {5}},\ \bibinfo {pages} {1038} (\bibinfo {year} {1964})}\BibitemShut {NoStop}%
\bibitem [{\citenamefont {Ludl}(2011)}]{SU3Ludl_2011}%
  \BibitemOpen
  \bibfield  {author} {\bibinfo {author} {\bibfnamefont {P.~O.}\ \bibnamefont {Ludl}},\ }\bibfield  {title} {\bibinfo {title} {Comments on the classification of the finite subgroups of {SU}(3)},\ }\href {https://doi.org/10.1088/1751-8113/44/25/255204} {\bibfield  {journal} {\bibinfo  {journal} {Journal of Physics A: Mathematical and Theoretical}\ }\textbf {\bibinfo {volume} {44}},\ \bibinfo {pages} {255204} (\bibinfo {year} {2011})}\BibitemShut {NoStop}%
\bibitem [{\citenamefont {Merle}\ and\ \citenamefont {Zwicky}(2012)}]{SUtree}%
  \BibitemOpen
  \bibfield  {author} {\bibinfo {author} {\bibfnamefont {A.}~\bibnamefont {Merle}}\ and\ \bibinfo {author} {\bibfnamefont {R.}~\bibnamefont {Zwicky}},\ }\bibfield  {title} {\bibinfo {title} {Explicit and spontaneous breaking of {SU}(3) into its finite subgroups},\ }\bibfield  {journal} {\bibinfo  {journal} {Journal of High Energy Physics}\ }\textbf {\bibinfo {volume} {2012}},\ \href {https://doi.org/10.1007/jhep02(2012)128} {10.1007/jhep02(2012)128} (\bibinfo {year} {2012})\BibitemShut {NoStop}%
\bibitem [{\citenamefont {Denys}\ and\ \citenamefont {Leverrier}(2023)}]{Lev}%
  \BibitemOpen
  \bibfield  {author} {\bibinfo {author} {\bibfnamefont {A.}~\bibnamefont {Denys}}\ and\ \bibinfo {author} {\bibfnamefont {A.}~\bibnamefont {Leverrier}},\ }\href@noop {} {\bibinfo {title} {Multimode bosonic cat codes with an easily implementable universal gate set}} (\bibinfo {year} {2023}),\ \Eprint {https://arxiv.org/abs/2306.11621} {arXiv:2306.11621 [quant-ph]} \BibitemShut {NoStop}%
\bibitem [{\citenamefont {Eastin}\ and\ \citenamefont {Knill}(2009)}]{eastinKnill}%
  \BibitemOpen
  \bibfield  {author} {\bibinfo {author} {\bibfnamefont {B.}~\bibnamefont {Eastin}}\ and\ \bibinfo {author} {\bibfnamefont {E.}~\bibnamefont {Knill}},\ }\bibfield  {title} {\bibinfo {title} {Restrictions on transversal encoded quantum gate sets},\ }\bibfield  {journal} {\bibinfo  {journal} {Physical Review Letters}\ }\textbf {\bibinfo {volume} {102}},\ \href {https://doi.org/10.1103/physrevlett.102.110502} {10.1103/physrevlett.102.110502} (\bibinfo {year} {2009})\BibitemShut {NoStop}%
\bibitem [{\citenamefont {Bravyi}\ and\ \citenamefont {Kitaev}(2005)}]{magicstatedist}%
  \BibitemOpen
  \bibfield  {author} {\bibinfo {author} {\bibfnamefont {S.}~\bibnamefont {Bravyi}}\ and\ \bibinfo {author} {\bibfnamefont {A.}~\bibnamefont {Kitaev}},\ }\bibfield  {title} {\bibinfo {title} {Universal quantum computation with ideal clifford gates and noisy ancillas},\ }\bibfield  {journal} {\bibinfo  {journal} {Physical Review A}\ }\textbf {\bibinfo {volume} {71}},\ \href {https://doi.org/10.1103/physreva.71.022316} {10.1103/physreva.71.022316} (\bibinfo {year} {2005})\BibitemShut {NoStop}%
\bibitem [{\citenamefont {Haah}\ \emph {et~al.}(2017)\citenamefont {Haah}, \citenamefont {Hastings}, \citenamefont {Poulin},\ and\ \citenamefont {Wecker}}]{magiccost1}%
  \BibitemOpen
  \bibfield  {author} {\bibinfo {author} {\bibfnamefont {J.}~\bibnamefont {Haah}}, \bibinfo {author} {\bibfnamefont {M.~B.}\ \bibnamefont {Hastings}}, \bibinfo {author} {\bibfnamefont {D.}~\bibnamefont {Poulin}},\ and\ \bibinfo {author} {\bibfnamefont {D.}~\bibnamefont {Wecker}},\ }\bibfield  {title} {\bibinfo {title} {Magic state distillation with low space overhead and optimal asymptotic input count},\ }\href {https://doi.org/10.22331/q-2017-10-03-31} {\bibfield  {journal} {\bibinfo  {journal} {Quantum}\ }\textbf {\bibinfo {volume} {1}},\ \bibinfo {pages} {31} (\bibinfo {year} {2017})}\BibitemShut {NoStop}%
\bibitem [{\citenamefont {Litinski}(2019)}]{magiccost2}%
  \BibitemOpen
  \bibfield  {author} {\bibinfo {author} {\bibfnamefont {D.}~\bibnamefont {Litinski}},\ }\bibfield  {title} {\bibinfo {title} {Magic state distillation: Not as costly as you think},\ }\href {https://doi.org/10.22331/q-2019-12-02-205} {\bibfield  {journal} {\bibinfo  {journal} {Quantum}\ }\textbf {\bibinfo {volume} {3}},\ \bibinfo {pages} {205} (\bibinfo {year} {2019})}\BibitemShut {NoStop}%
\bibitem [{\citenamefont {Parzanchevski}\ and\ \citenamefont {Sarnak}(2018)}]{superGoldenGates}%
  \BibitemOpen
  \bibfield  {author} {\bibinfo {author} {\bibfnamefont {O.}~\bibnamefont {Parzanchevski}}\ and\ \bibinfo {author} {\bibfnamefont {P.}~\bibnamefont {Sarnak}},\ }\bibfield  {title} {\bibinfo {title} {Super-golden-gates for {$PU(2)$}},\ }\href {https://doi.org/10.1016/j.aim.2017.06.022} {\bibfield  {journal} {\bibinfo  {journal} {Advances in Mathematics}\ }\textbf {\bibinfo {volume} {327}},\ \bibinfo {pages} {869} (\bibinfo {year} {2018})}\BibitemShut {NoStop}%
\bibitem [{\citenamefont {Evra}\ and\ \citenamefont {Parzanchevski}(2022)}]{goldenGatesPU(3)}%
  \BibitemOpen
  \bibfield  {author} {\bibinfo {author} {\bibfnamefont {S.}~\bibnamefont {Evra}}\ and\ \bibinfo {author} {\bibfnamefont {O.}~\bibnamefont {Parzanchevski}},\ }\bibfield  {title} {\bibinfo {title} {Ramanujan complexes and golden gates in {PU(3)}},\ }\href {https://doi.org/10.1007/s00039-022-00593-9} {\bibfield  {journal} {\bibinfo  {journal} {Geometric and Functional Analysis}\ }\textbf {\bibinfo {volume} {32}},\ \bibinfo {pages} {193–235} (\bibinfo {year} {2022})}\BibitemShut {NoStop}%
\bibitem [{\citenamefont {Gottesman}(1997)}]{stabilizer}%
  \BibitemOpen
  \bibfield  {author} {\bibinfo {author} {\bibfnamefont {D.}~\bibnamefont {Gottesman}},\ }\href@noop {} {\bibinfo {title} {Stabilizer codes and quantum error correction}} (\bibinfo {year} {1997}),\ \Eprint {https://arxiv.org/abs/quant-ph/9705052} {arXiv:quant-ph/9705052 [quant-ph]} \BibitemShut {NoStop}%
\bibitem [{\citenamefont {Ouyang}\ and\ \citenamefont {Brennen}(2024)}]{decoding1}%
  \BibitemOpen
  \bibfield  {author} {\bibinfo {author} {\bibfnamefont {Y.}~\bibnamefont {Ouyang}}\ and\ \bibinfo {author} {\bibfnamefont {G.~K.}\ \bibnamefont {Brennen}},\ }\href@noop {} {\bibinfo {title} {Finite-round quantum error correction on symmetric quantum sensors}} (\bibinfo {year} {2024}),\ \Eprint {https://arxiv.org/abs/2212.06285} {arXiv:2212.06285 [quant-ph]} \BibitemShut {NoStop}%
\bibitem [{\citenamefont {Ouyang}(2024)}]{decoding2}%
  \BibitemOpen
  \bibfield  {author} {\bibinfo {author} {\bibfnamefont {Y.}~\bibnamefont {Ouyang}},\ }\href@noop {} {\bibinfo {title} {Robust projective measurements through measuring code-inspired observables}} (\bibinfo {year} {2024}),\ \Eprint {https://arxiv.org/abs/2402.04093} {arXiv:2402.04093 [quant-ph]} \BibitemShut {NoStop}%
\bibitem [{\citenamefont {Li}(2010)}]{decoding3}%
  \BibitemOpen
  \bibfield  {author} {\bibinfo {author} {\bibfnamefont {Y.}~\bibnamefont {Li}},\ }\emph {\bibinfo {title} {Codeword stabilized quantum codes and their error correction}},\ \href@noop {} {Ph.D. thesis},\ \bibinfo  {school} {UC Riverside} (\bibinfo {year} {2010})\BibitemShut {NoStop}%
\bibitem [{\citenamefont {Mele}(2023)}]{designsHaarReview}%
  \BibitemOpen
  \bibfield  {author} {\bibinfo {author} {\bibfnamefont {A.~A.}\ \bibnamefont {Mele}},\ }\href@noop {} {\bibinfo {title} {Introduction to haar measure tools in quantum information: A beginner's tutorial}} (\bibinfo {year} {2023}),\ \Eprint {https://arxiv.org/abs/2307.08956} {arXiv:2307.08956 [quant-ph]} \BibitemShut {NoStop}%
\bibitem [{\citenamefont {Rowe}\ \emph {et~al.}(2012)\citenamefont {Rowe}, \citenamefont {Carvalho},\ and\ \citenamefont {Repka}}]{branching}%
  \BibitemOpen
  \bibfield  {author} {\bibinfo {author} {\bibfnamefont {D.~J.}\ \bibnamefont {Rowe}}, \bibinfo {author} {\bibfnamefont {M.~J.}\ \bibnamefont {Carvalho}},\ and\ \bibinfo {author} {\bibfnamefont {J.}~\bibnamefont {Repka}},\ }\bibfield  {title} {\bibinfo {title} {Dual pairing of symmetry and dynamical groups in physics},\ }\href {https://doi.org/10.1103/revmodphys.84.711} {\bibfield  {journal} {\bibinfo  {journal} {Reviews of Modern Physics}\ }\textbf {\bibinfo {volume} {84}},\ \bibinfo {pages} {711–757} (\bibinfo {year} {2012})}\BibitemShut {NoStop}%
\bibitem [{sup()}]{supp}%
  \BibitemOpen
  \href@noop {} {\bibinfo {title} {See supplemental material which includes refs \cite{gross1,2004permutation}, for more detailed calculations.}}\BibitemShut {Stop}%
\bibitem [{\citenamefont {Feger}\ \emph {et~al.}(2020)\citenamefont {Feger}, \citenamefont {Kephart},\ and\ \citenamefont {Saskowski}}]{LieART}%
  \BibitemOpen
  \bibfield  {author} {\bibinfo {author} {\bibfnamefont {R.}~\bibnamefont {Feger}}, \bibinfo {author} {\bibfnamefont {T.~W.}\ \bibnamefont {Kephart}},\ and\ \bibinfo {author} {\bibfnamefont {R.~J.}\ \bibnamefont {Saskowski}},\ }\bibfield  {title} {\bibinfo {title} {Lieart 2.0 – a mathematica application for lie algebras and representation theory},\ }\href {https://doi.org/10.1016/j.cpc.2020.107490} {\bibfield  {journal} {\bibinfo  {journal} {Computer Physics Communications}\ }\textbf {\bibinfo {volume} {257}},\ \bibinfo {pages} {107490} (\bibinfo {year} {2020})}\BibitemShut {NoStop}%
\bibitem [{\citenamefont {Bannai}\ \emph {et~al.}(2018)\citenamefont {Bannai}, \citenamefont {Navarro}, \citenamefont {Rizo},\ and\ \citenamefont {Tiep}}]{tGroupsBannai}%
  \BibitemOpen
  \bibfield  {author} {\bibinfo {author} {\bibfnamefont {E.}~\bibnamefont {Bannai}}, \bibinfo {author} {\bibfnamefont {G.}~\bibnamefont {Navarro}}, \bibinfo {author} {\bibfnamefont {N.}~\bibnamefont {Rizo}},\ and\ \bibinfo {author} {\bibfnamefont {P.~H.}\ \bibnamefont {Tiep}},\ }\href@noop {} {\bibinfo {title} {Unitary t-groups}} (\bibinfo {year} {2018}),\ \Eprint {https://arxiv.org/abs/1810.02507} {arXiv:1810.02507 [math.RT]} \BibitemShut {NoStop}%
\bibitem [{\citenamefont {Stembridge}(1992)}]{twisted1}%
  \BibitemOpen
  \bibfield  {author} {\bibinfo {author} {\bibfnamefont {J.~R.}\ \bibnamefont {Stembridge}},\ }\bibfield  {title} {\bibinfo {title} {{On Schur's Q-functions and the Primitive Idempotents of a Commutative Hecke Algebra}},\ }\href {https://doi.org/10.1023/a:1022485331028} {\bibfield  {journal} {\bibinfo  {journal} {Journal of Algebraic Combinatorics}\ }\textbf {\bibinfo {volume} {1}},\ \bibinfo {pages} {71–95} (\bibinfo {year} {1992})}\BibitemShut {NoStop}%
\bibitem [{\citenamefont {Isham}(1983)}]{twisted2}%
  \BibitemOpen
  \bibfield  {author} {\bibinfo {author} {\bibfnamefont {C.~J.}\ \bibnamefont {Isham}},\ }\bibfield  {title} {\bibinfo {title} {{Topological and Global Aspects of Quantum Theory}},\ }in\ \href@noop {} {\emph {\bibinfo {booktitle} {{Les Houches Summer School on Theoretical Physics: Relativity, Groups and Topology}}}}\ (\bibinfo {year} {1983})\ pp.\ \bibinfo {pages} {1059--1290}\BibitemShut {NoStop}%
\bibitem [{\citenamefont {Knill}\ \emph {et~al.}(2000)\citenamefont {Knill}, \citenamefont {Laflamme},\ and\ \citenamefont {Viola}}]{KL}%
  \BibitemOpen
  \bibfield  {author} {\bibinfo {author} {\bibfnamefont {E.}~\bibnamefont {Knill}}, \bibinfo {author} {\bibfnamefont {R.}~\bibnamefont {Laflamme}},\ and\ \bibinfo {author} {\bibfnamefont {L.}~\bibnamefont {Viola}},\ }\bibfield  {title} {\bibinfo {title} {Theory of quantum error correction for general noise},\ }\href {https://doi.org/10.1103/physrevlett.84.2525} {\bibfield  {journal} {\bibinfo  {journal} {Physical Review Letters}\ }\textbf {\bibinfo {volume} {84}},\ \bibinfo {pages} {2525} (\bibinfo {year} {2000})}\BibitemShut {NoStop}%
\bibitem [{\citenamefont {Kubischta}\ and\ \citenamefont {Teixeira}(2023{\natexlab{a}})}]{us2}%
  \BibitemOpen
  \bibfield  {author} {\bibinfo {author} {\bibfnamefont {E.}~\bibnamefont {Kubischta}}\ and\ \bibinfo {author} {\bibfnamefont {I.}~\bibnamefont {Teixeira}},\ }\href@noop {} {\bibinfo {title} {The not-so-secret fourth parameter of quantum codes}} (\bibinfo {year} {2023}{\natexlab{a}}),\ \Eprint {https://arxiv.org/abs/2310.17652} {arXiv:2310.17652 [quant-ph]} \BibitemShut {NoStop}%
\bibitem [{GAP()}]{GAP}%
  \BibitemOpen
  GAP,\ \href@noop {} {\bibinfo {title} {{GAP} {\textendash} {G}roups, {A}lgorithms, and {P}rogramming, {V}ersion 4.12.0}},\ \bibinfo {howpublished} {\href {https://www.gap-system.org} {\texttt{https://www.gap-system.org}}} (\bibinfo {year} {2018})\BibitemShut {NoStop}%
\bibitem [{\citenamefont {Kubischta}\ and\ \citenamefont {Teixeira}(2023{\natexlab{b}})}]{us1}%
  \BibitemOpen
  \bibfield  {author} {\bibinfo {author} {\bibfnamefont {E.}~\bibnamefont {Kubischta}}\ and\ \bibinfo {author} {\bibfnamefont {I.}~\bibnamefont {Teixeira}},\ }\bibfield  {title} {\bibinfo {title} {Family of quantum codes with exotic transversal gates},\ }\bibfield  {journal} {\bibinfo  {journal} {Physical Review Letters}\ }\textbf {\bibinfo {volume} {131}},\ \href {https://doi.org/10.1103/physrevlett.131.240601} {10.1103/physrevlett.131.240601} (\bibinfo {year} {2023}{\natexlab{b}})\BibitemShut {NoStop}%
\bibitem [{\citenamefont {Teixeira}(2023)}]{sigma360Ian}%
  \BibitemOpen
  \bibfield  {author} {\bibinfo {author} {\bibfnamefont {I.}~\bibnamefont {Teixeira}},\ }\href@noop {} {\bibinfo {title} {Finding the $5$ qutrit ${\Sigma}(360\phi)$ code.}} (\bibinfo {year} {2023}),\ \bibinfo {note} {private communication}\BibitemShut {NoStop}%
\bibitem [{\citenamefont {Albert}(2023)}]{sigma360Victor}%
  \BibitemOpen
  \bibfield  {author} {\bibinfo {author} {\bibfnamefont {V.~V.}\ \bibnamefont {Albert}},\ }\href@noop {} {\bibinfo {title} {Finding the $5$ qutrit ${\Sigma}(360\phi)$ code.}} (\bibinfo {year} {2023}),\ \bibinfo {note} {private communication}\BibitemShut {NoStop}%
\bibitem [{\citenamefont {Rains}\ \emph {et~al.}(1997)\citenamefont {Rains}, \citenamefont {Hardin}, \citenamefont {Shor},\ and\ \citenamefont {Sloane}}]{nonadditive}%
  \BibitemOpen
  \bibfield  {author} {\bibinfo {author} {\bibfnamefont {E.~M.}\ \bibnamefont {Rains}}, \bibinfo {author} {\bibfnamefont {R.~H.}\ \bibnamefont {Hardin}}, \bibinfo {author} {\bibfnamefont {P.~W.}\ \bibnamefont {Shor}},\ and\ \bibinfo {author} {\bibfnamefont {N.~J.~A.}\ \bibnamefont {Sloane}},\ }\bibfield  {title} {\bibinfo {title} {A nonadditive quantum code},\ }\href {https://doi.org/10.1103/physrevlett.79.953} {\bibfield  {journal} {\bibinfo  {journal} {Physical Review Letters}\ }\textbf {\bibinfo {volume} {79}},\ \bibinfo {pages} {953} (\bibinfo {year} {1997})}\BibitemShut {NoStop}%
\bibitem [{\citenamefont {Smolin}\ \emph {et~al.}(2007)\citenamefont {Smolin}, \citenamefont {Smith},\ and\ \citenamefont {Wehner}}]{nonadditive2}%
  \BibitemOpen
  \bibfield  {author} {\bibinfo {author} {\bibfnamefont {J.~A.}\ \bibnamefont {Smolin}}, \bibinfo {author} {\bibfnamefont {G.}~\bibnamefont {Smith}},\ and\ \bibinfo {author} {\bibfnamefont {S.}~\bibnamefont {Wehner}},\ }\bibfield  {title} {\bibinfo {title} {Simple family of nonadditive quantum codes},\ }\bibfield  {journal} {\bibinfo  {journal} {Physical Review Letters}\ }\textbf {\bibinfo {volume} {99}},\ \href {https://doi.org/10.1103/physrevlett.99.130505} {10.1103/physrevlett.99.130505} (\bibinfo {year} {2007})\BibitemShut {NoStop}%
\bibitem [{\citenamefont {Yu}\ \emph {et~al.}(2008)\citenamefont {Yu}, \citenamefont {Chen}, \citenamefont {Lai},\ and\ \citenamefont {Oh}}]{nonadditive3}%
  \BibitemOpen
  \bibfield  {author} {\bibinfo {author} {\bibfnamefont {S.}~\bibnamefont {Yu}}, \bibinfo {author} {\bibfnamefont {Q.}~\bibnamefont {Chen}}, \bibinfo {author} {\bibfnamefont {C.~H.}\ \bibnamefont {Lai}},\ and\ \bibinfo {author} {\bibfnamefont {C.~H.}\ \bibnamefont {Oh}},\ }\bibfield  {title} {\bibinfo {title} {Nonadditive quantum error-correcting code},\ }\bibfield  {journal} {\bibinfo  {journal} {Physical Review Letters}\ }\textbf {\bibinfo {volume} {101}},\ \href {https://doi.org/10.1103/physrevlett.101.090501} {10.1103/physrevlett.101.090501} (\bibinfo {year} {2008})\BibitemShut {NoStop}%
\bibitem [{\citenamefont {Ouyang}(2014)}]{perminvOuyang}%
  \BibitemOpen
  \bibfield  {author} {\bibinfo {author} {\bibfnamefont {Y.}~\bibnamefont {Ouyang}},\ }\bibfield  {title} {\bibinfo {title} {Permutation-invariant quantum codes},\ }\bibfield  {journal} {\bibinfo  {journal} {Physical Review A}\ }\textbf {\bibinfo {volume} {90}},\ \href {https://doi.org/10.1103/physreva.90.062317} {10.1103/physreva.90.062317} (\bibinfo {year} {2014})\BibitemShut {NoStop}%
\bibitem [{\citenamefont {Aydin}\ \emph {et~al.}(2024)\citenamefont {Aydin}, \citenamefont {Alekseyev},\ and\ \citenamefont {Barg}}]{barg}%
  \BibitemOpen
  \bibfield  {author} {\bibinfo {author} {\bibfnamefont {A.}~\bibnamefont {Aydin}}, \bibinfo {author} {\bibfnamefont {M.~A.}\ \bibnamefont {Alekseyev}},\ and\ \bibinfo {author} {\bibfnamefont {A.}~\bibnamefont {Barg}},\ }\href@noop {} {\bibinfo {title} {A family of permutationally invariant quantum codes}} (\bibinfo {year} {2024}),\ \Eprint {https://arxiv.org/abs/2310.05358} {arXiv:2310.05358 [quant-ph]} \BibitemShut {NoStop}%
\bibitem [{\citenamefont {Pollatsek}\ and\ \citenamefont {Ruskai}(2004)}]{2004permutation}%
  \BibitemOpen
  \bibfield  {author} {\bibinfo {author} {\bibfnamefont {H.}~\bibnamefont {Pollatsek}}\ and\ \bibinfo {author} {\bibfnamefont {M.~B.}\ \bibnamefont {Ruskai}},\ }\bibfield  {title} {\bibinfo {title} {Permutationally invariant codes for quantum error correction},\ }\href {https://doi.org/https://doi.org/10.1016/j.laa.2004.06.014} {\bibfield  {journal} {\bibinfo  {journal} {Linear Algebra and its Applications}\ }\textbf {\bibinfo {volume} {392}},\ \bibinfo {pages} {255} (\bibinfo {year} {2004})}\BibitemShut {NoStop}%
\bibitem [{\citenamefont {Gross}(2021)}]{gross1}%
  \BibitemOpen
  \bibfield  {author} {\bibinfo {author} {\bibfnamefont {J.~A.}\ \bibnamefont {Gross}},\ }\bibfield  {title} {\bibinfo {title} {Designing codes around interactions: The case of a spin},\ }\bibfield  {journal} {\bibinfo  {journal} {Physical Review Letters}\ }\textbf {\bibinfo {volume} {127}},\ \href {https://doi.org/10.1103/physrevlett.127.010504} {10.1103/physrevlett.127.010504} (\bibinfo {year} {2021})\BibitemShut {NoStop}%
\end{thebibliography}%

\appendix 

\makeatletter
\renewcommand{\theequation}{S\arabic{equation}}
\renewcommand{\thetable}{S\arabic{table}}
\renewcommand{\thefigure}{S\arabic{figure}}
\renewcommand{\thelemma}{S\arabic{lemma}}
\renewcommand{\thetheorem}{S\arabic{theorem}}
\setcounter{table}{0}
\setcounter{figure}{0}
\setcounter{lemma}{0}
\setcounter{equation}{0}

\newpage
\onecolumngrid
\section{Supplemental Material}
\twocolumngrid

\section{Equivalent Conditions for Unitary $t$-groups}

\begin{lemma} The following are equivalent:
\begin{enumerate}[(1)]
    \item $\G \subset \U(q)$ is a unitary $t$-group
    \item $\frac{1}{|\G|} \sum_{g \in \G} \qty( \f \otimes \f^*)^{\otimes t}(g)  = \int_{\U(q)}  \qty(\F \otimes \F^*)^{\otimes t}(g) \, dg$
    \item $\Pi^{\qty( \f \otimes \f^*)^{\otimes t}}_{ \irrep{1} } = \Pi^{\qty( \F \otimes \F^*)^{\otimes t}}_{ \irrep{1} }$
    \item $\expval*{1,(ff^*)^{t} } = \expval*{1,(FF^*)^{t}}$
    \item $ \lVert f^{ t} \rVert= \lVert F^{ t} \rVert $
    \item $\frac{1}{|\G|} \sum_{g \in \G} \irre(g) =  \int_{\U(q)} \irrE(g) \, dg, \qquad \forall \irrE\in \mathcal{E}_t$ 
    \item $\frac{1}{|\G|} \sum_{g \in \G} \irre(g) = \bm{0},  \quad \forall \irrE\in \mathcal{E}_t, \irrE\neq \irrep{1} $ 
    \item $\Pi^{\irre}_{ \irrep{1}} = \bm{0},  \quad \forall \irrE\in \mathcal{E}_t, \irrE\neq \irrep{1}$ 
    \item $\expval{1,\irrechar} = 0,  \quad \forall \irrE\in \mathcal{E}_t, \irrE\neq \irrep{1}$ 
\end{enumerate}
\end{lemma}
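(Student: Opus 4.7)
The plan is to prove the nine equivalences by establishing two parallel chains --- a global chain (1)$\Leftrightarrow$(2)$\Leftrightarrow$(3)$\Leftrightarrow$(4)$\Leftrightarrow$(5) on the full representation $(\F \otimes \F^*)^{\otimes t}$ and a per-irrep chain (6)$\Leftrightarrow$(7)$\Leftrightarrow$(8)$\Leftrightarrow$(9) indexed by $\irrE \in \mathcal{E}_t$ --- and then bridging them with (2)$\Leftrightarrow$(6) via the isotypic decomposition \cref{eqn:ffdecomp}. Every link has already appeared in passing in the main text; the lemma's role is simply to collect them into one statement.

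For the global chain, (1)$\Leftrightarrow$(2) is the definition, and (2)$\Leftrightarrow$(3) is the standard projector formula: averaging any representation $V$ of a finite group over that group yields the projector $\Pi^V_{\irrep{1}}$ onto the trivial isotype of $V$, with the analogous Haar statement for representations of $\U(q)$. Taking traces of (3) gives (4) directly; the converse uses that $\G \subset \U(q)$ implies every $\U(q)$-invariant vector of $(\F \otimes \F^{*})^{\otimes t}$ is automatically $\G$-invariant, so the LHS projector of (3) contains the RHS, and equal traces then force the containment to be equality. Finally (4)$\Leftrightarrow$(5) is the character identity $\expval{1,(ff^{*})^{t}} = \expval{f^{t}, f^{t}} = \lVert f^{t} \rVert$ (and analogously for $F$), which uses only $(f^{t})^{*} = (f^{*})^{t}$ and the conjugation rule for the character inner product.

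For the per-irrep chain I would establish the bridge (2)$\Leftrightarrow$(6) first: inserting \cref{eqn:ffdecomp} into both sides of (2) yields two direct sums over $\irrE \in \mathcal{E}_t$ with identical $\U(q)$-determined multiplicities $m_{\irrE}$, and equality of block-diagonal operators is equivalent to blockwise equality. The step (6)$\Leftrightarrow$(7) is Schur orthogonality for $\U(q)$: $\int_{\U(q)} \irrE(g)\, dg = \bm{0}$ when $\irrE \neq \irrep{1}$, while for $\irrE = \irrep{1}$ both sides of (6) trivially equal $1$, so that case may be dropped. Then (7)$\Leftrightarrow$(8) is another instance of the projector formula, now applied to the branched representation $\irre$, and (8)$\Leftrightarrow$(9) is the trace-and-multiplicity argument together with the fact that a projector of trace zero is itself zero.

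The main ``obstacle'' is organizational rather than mathematical: one must distinguish $\U(q)$-isotypic decompositions (which determine the multiplicities $m_{\irrE}$ in \cref{eqn:ffdecomp}) from $\G$-isotypic decompositions (which govern whether $\irre$ contains the trivial $\G$-rep), and be careful that $\irre$ is in general a reducible $\G$-representation so that $\Pi^{\irre}_{\irrep{1}}$ refers to the trivial projector of the restriction, not of $\irrE$ itself. Once this bookkeeping is in place, all nine conditions chain together and the lemma follows.
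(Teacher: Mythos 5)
Your proposal is correct and follows essentially the same route as the paper, which derives exactly these links in its main text: the definition--projector--trace chain for (1)--(5), the isotypic decomposition \cref{eqn:ffdecomp} as the bridge to the per-irrep conditions, vanishing of the Haar integral for nontrivial $\U(q)$-irreps, and the containment-plus-equal-trace (equivalently, monotonicity of trivial-irrep multiplicity under branching) argument to upgrade the scalar conditions back to projector equalities. Nothing is missing.
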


\section{Equivalent Conditions for \textit{Twisted} Unitary $t$-groups}

\begin{lemma} The following are equivalent:
\begin{enumerate}[(1)]
    \item $\G \subset \U(q)$ is a $\logirr$-twisted unitary $t$-group
    \item $ \frac{1}{|\G| } \sum\limits_{g \in \G} |\lambda(g)|^2  \qty( \f \otimes \f^*)^{\otimes t}(g)  = \int\limits_{\U(q)}  \qty(\F \otimes \F^*)^{\otimes t}(g) \, dg$
    \item $\Pi^{\qty( \f \otimes \f^*)^{\otimes t}}_{ \logirr^* \otimes \logirr } = \Pi^{\qty( \F \otimes \F^*)^{\otimes t}}_{ \irrep{1} }$
    \item $\expval*{\lambda^* \lambda , \qty(f  f^*)^{ t} } = \expval*{1 , (F F^*)^{t} } $
    \item $ \lVert \lambda f^{ t} \rVert = \lVert F^t \rVert $
    \item $\frac{1}{|\G| } \sum_{g \in \G}  |\lambda(g)|^2 \, \irre(g) = \int_{\U(q)} \irrE(g) \, dg, \quad \forall \irrE\in \mathcal{E}_t$
    \item $\frac{1}{|\G| } \sum_{g \in \G} |\lambda(g)|^2 \, \irre(g) = \bm{0},  \quad \forall \irrE\in \mathcal{E}_t, \irrE\neq \irrep{1}$ 
    \item $\Pi^{\irre}_{\logirr^* \otimes \logirr } = \bm{0},  \quad \forall \irrE\in \mathcal{E}_t, \irrE\neq \irrep{1}$ 
    \item $\expval{\lambda^* \lambda, \irrechar   } = 0,  \quad \forall \irrE\in \mathcal{E}_t, \irrE\neq \irrep{1}$ 
\end{enumerate}
\end{lemma}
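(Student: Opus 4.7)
The plan is to establish the nine conditions as mutually equivalent by following the template already sketched for ordinary unitary $t$-groups in the main text, simply replacing the uniform weight $1/|\G|$ with the twisted weight $W_\logirr(g) = |\lambda(g)|^2/|\G|$ throughout. The equivalence (1) $\iff$ (2) is immediate from the definition of a $\logirr$-twisted unitary $t$-group. For (2) $\iff$ (6), I would insert the isotypic decomposition $(\F \otimes \F^*)^{\otimes t} = \bigoplus_{\irrE \in \mathcal{E}_t} (m_\irrE)\, \irrE$ (and the corresponding branched decomposition on the $\G$-side) into both sides of (2); since the decomposition is a direct sum, equality of the full operator is equivalent to equality on each $\irrE$-isotypic summand. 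Then (6) $\iff$ (7) uses Schur orthogonality on the compact group $\U(q)$: for $\irrE \neq \irrep{1}$ the Haar integral $\int_{\U(q)} \irrE(g)\, dg$ vanishes, and for $\irrE = \irrep{1}$ both sides of (6) collapse to $1$ because $\sum_g W_\logirr(g) = \lVert \lambda \rVert = 1$ by irreducibility of $\logirr$.

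The identifications (2) $\iff$ (3) and (7) $\iff$ (8) rest on the single observation that $|\lambda(g)|^2 = \lambda^*(g)\lambda(g) = \chi_{\logirr^* \otimes \logirr}(g)$, so the twisted-weight average $\frac{1}{|\G|}\sum_g |\lambda(g)|^2 \rho(g)$ is precisely the isotypic projector of $\rho$ onto $\logirr^* \otimes \logirr$ (in the same convention used for the trivial case in the main text), while the Haar integral on the right is the isotypic projector onto $\irrep{1}$. Taking the trace of both sides of the projector identities gives (3) $\iff$ (4) and (8) $\iff$ (9), using that the trace of $\frac{1}{|\G|}\sum_g |\lambda(g)|^2 \rho(g)$ is the character inner product $\expval{\lambda^*\lambda, \chi_\rho}$ and the trace of the Haar projector is $\expval{1, \chi_R}$. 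Finally, (4) $\iff$ (5) is a routine character manipulation: moving one factor of $\lambda$ into each slot of the inner product gives $\expval{\lambda^*\lambda, (ff^*)^t} = \expval{\lambda f^t, \lambda f^t} = \lVert \lambda f^t \rVert$, and likewise $\expval{1, (FF^*)^t} = \lVert F^t \rVert$.

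The main obstacle is the projector identification underlying (2) $\iff$ (3) and (7) $\iff$ (8). Because $\logirr^* \otimes \logirr$ is generically reducible, decomposing as $\irrep{1} \oplus \bmsf{\xi}$ with $\bmsf{\xi}$ itself possibly further reducible, one must verify that the twisted-weight average $\frac{1}{|\G|}\sum_g |\lambda(g)|^2 \rho(g)$ really does agree with the isotypic projector $\Pi^\rho_{\logirr^* \otimes \logirr}$ in the paper's convention; this requires carefully tracking multiplicities, dimensions, and duals when expanding $|\lambda|^2$ in terms of irreducible characters and invoking the Schur projection formula $\Pi^\rho_\sigma = \tfrac{\dim\sigma}{|\G|}\sum_g \chi_\sigma^*(g)\rho(g)$ for each constituent. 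Once this bookkeeping is settled, the nine conditions close up into a single loop of implications and the lemma is proved.
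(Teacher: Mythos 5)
Your proposal is correct and follows essentially the same route as the paper's proof in the Supplemental Material: definition $\to$ projector equation $\to$ trace and character identities, then the isotypic decomposition of $(\F\otimes\F^*)^{\otimes t}$ together with Haar orthogonality to pass to the irrep-by-irrep conditions. The one ``obstacle'' you flag dissolves under the paper's convention: $\Pi^{\rho}_{\logirr^*\otimes\logirr}$ in conditions (3) and (8) is simply notation for the weighted average $\frac{1}{|\G|}\sum_{g}|\lambda(g)|^2\,\rho(g)$ obtained by inserting the character of $\logirr^*\otimes\logirr$ into the averaging formula, so those conditions are restatements of (2) and (7) and no multiplicity or dimension bookkeeping is required.
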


\section{Character Theory Background}

Here we review character theory and other relevant background in a manner similar to \cite{GrossCharacters&tGroups}. Every representation $ \irrIco $ of a finite group $ \G $ has a corresponding character $ \irrIcochar(g)=\text{tr}( \irrIco(g)) $, where $ \text{tr} $ denotes the trace of the matrix. The scalar product between characters is given by
\[
\expval{\irrIcochar,\irrGAPchar} := \frac{1}{|\G|} \sum_{g \in \G}  \irrIcochar^*(g) \chi(g). \numberthis
\]
Denote the irreducible representations
(irreps) of $ \G $ by $  \irrGAP_i $ and their associated irreducible characters by $ \irrGAPchar_i $. 
It is a fundamental relation that the irreducible characters are orthonormal $ \expval{\irrGAPchar_i,\irrGAPchar_j}=\delta_{ij} $. The fact that any representation $ \irrIco $ of a finite group reduces to a direct sum of irreps $ \irrIco= \bigoplus_i (m_i) \irrGAP_i $
means that any character can be expanded in terms of the irreducible ones and further that $ \expval{\irrIcochar,\irrGAPchar_i}=m_i $ gives the number of
times
that $ \irrGAP_i $ occurs in the decomposition of $ \irrIco $. The norm of a character is defined to be $ \lVert \irrIcochar \rVert :=\expval{\irrIcochar,\irrIcochar}=\sum_i m_i^2 $, which is always a positive integer (indeed the scalar product of any two characters is always a nonnegative integer), and moreover $ \lVert \irrIcochar \rVert =1 $ if and only if $ \irrIco $ is irreducible. 

The direct sum $ \irrIco= \bigoplus_i (m_i) \irrGAP_i $ is called the \textit{isotypic decomposition} of $ \irrIco $.
The subspace
\[
(m_i) \irrGAP_i := \underbrace{\irrGAP_i \oplus \cdots \oplus \irrGAP_i}_{m_i} \numberthis
\]
of $ \irrIco $ is canonical and is called the $ \irrGAP_i $-isotypic subspace of $ \irrIco $. The projector onto this space is given by 
\[
\Pi^{\irrIco}_{\irrGAP_i}= \frac{|\irrGAP_i|}{|\G|} \sum_{g \in \G} \irrGAP_i^*(g) \irrIco(g),\numberthis
\]
where $ |\irrGAP_i| $ is the dimension of the representation $ \irrGAP_i $ (also called the degree). For example if $ \irrGAP_i $ is the trivial irrep $ \irrep{1} $ the projector is
\[
\Pi^{\irrIco}_{\irrep{1}}= \frac{1}{|\G|} \sum_{g \in \G} \irrIco(g). \numberthis
\]

\section{Proof of \cref{lem:twistedunitarytgroup1}}

The proof of \cref{lem:twistedunitarytgroup1} is similar to the proof of \cref{lem:unitarytgroup1}, but we present it here for completeness.

Recall that we defined $ \G $ to be a $ \logirr $-twisted unitary $ t $-group if
\[
    \tfrac{1}{|\G|} \sum_{g \in \G} |\lambda(g)|^2 \qty( \f \otimes \f^*)^{\otimes t}(g)  = \int_{\U(q)}  \qty(\F \otimes \F^*)^{\otimes t}(g) \, dg . \numberthis \label{eqn:twistedtGroup}
\]
Since $ |\lambda(g)|^2= \lambda^*(g)\lambda(g) $ is the character of $ \logirr^* \otimes \logirr $, we can recognize \cref{eqn:twistedtGroup} as simply a projector equation:
\[
\Pi^{ (\f \otimes \f^*)^{\otimes t}}_{ \logirr^* \otimes \logirr } = \Pi^{\qty( \F \otimes \F^*)^{\otimes t}}_{ \irrep{1} }. \numberthis
\]
That is, a $ \logirr $-twisted unitary $ t $-group is such that the projector of the $\U(q)$-representation $(\F \otimes \F^*)^{\otimes t}$ onto the trivial irrep $\irrep{1}$ must be the same as the projector of the $\G$-representation $(\f \otimes \f^*)^{\otimes t}$ onto the isotypic subspaces for the $\G$-representation $ \logirr^* \otimes \logirr $. 

If we take the trace of both sides of \cref{eqn:twistedtGroup}  then we arrive at the equation
\[
    \expval*{\lambda^* \lambda, (ff^*)^t} =  \expval*{1, (FF^*)^t}. \numberthis \label{eqn:easycomputationtwisted}
\]

Note that one can move characters within the inner product at the expense of a complex conjugation. Thus \cref{eqn:easycomputationtwisted} says $\expval{\lambda f^t,f^t} = \expval{F^t,F^t}$, or $ \lVert \lambda f^t \rVert = \lVert F^t \rVert $.

Now notice that $(\F \otimes \F^*)^{\otimes t}$ is a reducible $\U(q)$ representation and can be decomposed as
\[
    \qty( \F \otimes \F^*)^{\otimes t} = \bigoplus_{\irrE\in \mathcal{E}_t } (m_{\irrE} ) \, \irrE . \numberthis \label{eqn:ffdecomp}
\]
So we can take \cref{eqn:twistedtGroup} and insert the decomposition from \cref{eqn:ffdecomp} to obtain
\[
  \bigoplus_{\irrE\in \mathcal{E}_t } (m_{\irrE}) \, \frac{1}{|\G|} \sum_{g \in \G} |\lambda(g)|^2 \, \irre(g) = \bigoplus_{\irrE\in \mathcal{E}_t } (m_{\irrE}) \, \int_{\U(q)} \irrE(g) \, dg. \numberthis
\]
Here $\irre$ denotes the restriction of $\irrE$ to $\G$. Thus we see that $\G$ is a unitary $t$-group if and only if
\[
   \frac{1}{|\G|} \sum_{g \in \G} |\lambda(g)|^2 \, \irre (g) =  \int_{\U(q)} \irrE(g) \, dg \qquad \forall \irrE\in \mathcal{E}_t.\numberthis
\]
Which we can recognize as an equality of projectors 
 $ \Pi^{\irre}_{\logirr^* \otimes \logirr } = \Pi^{\irrE}_{\irrep{1} }$ for all $ \irrE\in \mathcal{E}_t.$ However, notice that when $\irrE= \irrep{1}$ this equality is trivially satisfied (because $ \logirr^* \otimes \logirr $ contains a unique copy of the trivial irrep). On the other hand, when $\irrE\neq \irrep{1}$ the right hand side is the zero-matrix $\bm{0}$, that is,
\[
    \Pi^{\irre}_{\logirr^* \otimes \logirr} = \bm{0} \qquad \forall \irrE\in \mathcal{E}_t : \irrE\neq \irrep{1}. \numberthis
\]
In a similar fashion as before we can take the trace of both sides to get the final condition
\[
    \expval{\lambda^* \lambda, \irrechar } = 0 \quad \forall \irrE\in \mathcal{E}_t : \irrE\neq \irrep{1}. \numberthis
\]

\section{Unique Codes and Code Multiplicity}

The $ 7 $ qubit $ \ico $ code is unique in the sense that all $ 7 $ qubit error correcting codes with transversal gate group $ \ico $ must be equivalent via non-entangling gates. This code was first written down in \cite{2004permutation} in the process of investigating small permutationally invariant codes. Later, a spin code corresponding to the $ 7 $ qubit $ \ico $ code was independently discovered in \cite{gross1}. Finally, a previous paper of the authors rediscovered the $ 7 $ qubit $ \ico $ code, and was the first work to determine the transversal gates of the code and make connections with fault tolerance \cite{us1}. That work also proved the uniqueness of the $ 7 $ qubit $ \ico $ code by observing that a certain irrep occured with multiplicity $ 1 $. 

Inspired by the $ 7 $ qubit $ \ico $ code, one of the authors (IT) conjectured that all codes corresponding to multiplicity $ 1 $ irreps always have good distance ($ d\geq 2$), and used this technique to discover the unique $ 5 $ qutrit $ \Sigma(360 \phi) $ code, which occurs in this paper as the smallest code in the second example code family given in the main text \cite{sigma360Ian}. This technique of searching for multiplicity $ 1 $ codes was communicated to Victor V. Albert, who used it to independently rediscover the unique $ 5 $ qutrit $ \Sigma(360 \phi) $  code \cite{sigma360Victor}. Although unique objects are always of special interest, further investigation revealed that nearly all multiplicity $ 1 $ codes have trivial distance $ d=1 $ and thus a completely different approach to understanding the $ 7 $ qubit $ \ico $ code and the $ 5 $ qutrit $ \Sigma(360 \phi) $ code had to be undertaken.

Far from being unique, most of the codes in this paper live in a continuously varying moduli space of non-equivalent codes with identical code parameters and transversal gate group. Indeed, if $ \G $ is a $ \logirr $ twisted unitary $ t $-group then the moduli space of code constructed from \cref{thm:APP} is exactly the complex projective space $ \mathbb{C}P^{m_{\logirr}-1} $ (note that if the multiplicity $ m_{\logirr} $ is $ 1 $, as in the special cases discussed above, then the moduli space of codes degenerates to  $ \mathbb{C}P^{0} $, a single point, thus proving uniqueness).

To rigorously confirm that the moduli space is $ \mathbb{C}P^{m_{\logirr}-1} $, we start with the isotypic decomposition $ \bigoplus_i (m_i) \irrGAP_i $ of $ \f^{\otimes n} $. By Schur's lemma, $ \mathrm{Hom}_{\G}(\logirr,  \irrGAP_i )=0 $ for $ \irrGAP_i \neq \logirr $ while $ \mathrm{Hom}_{\G}(\logirr,  \logirr )=\mathbb{C} $ (where $\mathrm{Hom}_{\G}(\logirr,  \irrGAP_i )$ is the set of $\G$-equivariant linear maps from $\logirr$ to $\irrGAP_i$ and $\mathbb{C}$ is the complex numbers transforming in the trivial irrep of $ \G $). Thus 
\begin{align}
    \mathrm{Hom}_{\G}(\logirr, \f^{\otimes n} ) &= \mathrm{Hom}_{\G}(\logirr, \bigoplus_i (m_i) \irrGAP_i )\\
    &= \bigoplus_i (m_i) \mathrm{Hom}_{\G}(\logirr,  \irrGAP_i )\\
   &= (m_{\logirr})\mathbb{C}\\
    &= {\mathbb{C}}^{m_{\logirr}}.
\end{align}
Every $ \logirr $ subrepresentation of $ \f^{\otimes n} $ is the image of a nonzero $\G$-equivariant linear map from $ \logirr $ to $ \f^{\otimes n} $. By the calculation above we see that the space of nonzero $\G$-equivariant linear maps is $ \mathbb{C}^{m_{\logirr}} \setminus \{0\} $. We can apply Schur' lemma again to conclude that two such maps will have the same image if and only if they are nonzero scalar multiples. Thus the moduli space of $ \logirr $ codes in $ \f^{\otimes n} $ is $ \mathbb{C}^{m_{\logirr}} \setminus \{0\}  $ modulo $ \mathbb{C}^{\times} $, which is  exactly the complex projective space $ \mathbb{C} {P}^{{m_{\logirr}}-1}$.

\newpage 
\onecolumngrid
\section{ {\large Example 1: $\ico$ Qubit Codes} }

\onecolumngrid
\begin{table}[htp]
    \centering
    $$
    \begin{array}{cc|ccccccccc} \toprule
 &  \text{\cite{us1}} & 1 & 12 & 12 & 20 & 30 & 20 & 12 & 1  & 12 \\ \midrule
 \irrGAP_1 & \irrIco_1 & 1 & 1 & 1 & 1 & 1 & 1 & 1 & 1 & 1 \\
 \irrGAP_2 &  \irrIco_2 & 2 & \zeta_5^4+\zeta_5 & \zeta_5^3+\zeta_5^2 & -1 & 0 & 1 & -\zeta_5^4-\zeta_5 & -2 & -\zeta_5^3-\zeta_5^2 \\
\irrGAP_3 & \overline{\irrIco_2}   & 2 & \zeta_5^3+\zeta_5^2 & \zeta_5^4+\zeta_5 & -1 & 0 & 1 & -\zeta_5^3-\zeta_5^2 & -2 & -\zeta_5^4-\zeta_5 \\
\irrGAP_4 & \overline{\irrIco_3}  & 3 & -\zeta_5^3-\zeta_5^2 & -\zeta_5^4-\zeta_5 & 0 & -1 & 0 & -\zeta_5^3-\zeta_5^2 & 3 & -\zeta_5^4-\zeta_5 \\
 \irrGAP_5 & \irrIco_3  & 3 & -\zeta_5^4-\zeta_5 & -\zeta_5^3-\zeta_5^2 & 0 & -1 & 0 & -\zeta_5^4-\zeta_5 & 3 & -\zeta_5^3-\zeta_5^2 \\
 \irrGAP_6 & \irrIco_{4'} & 4 & -1 & -1 & 1 & 0 & 1 & -1 & 4 & -1 \\
 \irrGAP_7 & \irrIco_4 & 4 & -1 & -1 & 1 & 0 & -1 & 1 & -4 & 1 \\
 \irrGAP_8 & \irrIco_5 & 5 & 0 & 0 & -1 & 1 & -1 & 0 & 5 & 0 \\
 \irrGAP_9 & \irrIco_6 & 6 & 1 & 1 & 0 & 0 & 0 & -1 & -6 & -1 \\ \bottomrule
\end{array}
$$
    \caption{Character table for qubit group $\ico$ taken from GAP as \texttt{PerfectGroup(120)}. The 1st row is the size of each conjugacy class. The 1st column is the name given in GAP. The 2nd column is the name we gave in \cite{us1}. The 3rd column is the dimension of the irrep $|\irrGAP_i|$. Note that $\zeta_k = \exp(2\pi i/k)$ is a $k$-th root of unity.
    }
\end{table}

\twocolumngrid

\section{GAP code for $\ico$ in $\U(2)$}

\begin{center}
\begin{tabular}{c}
\begin{lstlisting}[linewidth=8cm]
gap> g:=PerfectGroup(120);; # 2I in U(2)
gap> ct:=CharacterTable(g);;
gap> f:=Irr(ct)[2];; # fundamental 2-dim irrep in U(2)
gap> Degree(f);
2

# Code Family: X.3-twisted 2-group
gap> lambda:=Irr(ct)[3];; 
gap> Degree(lambda);
2
gap> Norm(lambda*f);  
1
gap> Norm(lambda*f^2); 
2
gap> Norm(lambda*f^3); 
6
gap> for n in [1..21] do
>       s:=ScalarProduct(lambda,f^n);
>       if s>0 then Print("\t\tn=",n,"\t(m_lambda=",s,")\n"); fi;
> od;
                n=7     (m_lambda=1)
                n=9     (m_lambda=8)
                n=11    (m_lambda=44)
                n=13    (m_lambda=209)
                n=15    (m_lambda=924)
                n=17    (m_lambda=3928)
                n=19    (m_lambda=16321)
                n=21    (m_lambda=66880)
\end{lstlisting}
\end{tabular}
\end{center}

\newpage 
Let $\G = \ico$ in  $\U(2)$. One can show the following branching rules. 
\begin{table}[htp]
    \centering
    \begin{tabular}{l|l} \toprule
        $\irrE$ & $\irre$ \\ \midrule
         $\bm{1}$ & $\irrGAP_1$ \\
       $\bm{3}$ & $\irrGAP_5$ \\
        $\bm{5}$ & $\irrGAP_8$ \\
        $\bm{7}$ & $\irrGAP_4 \oplus \irrGAP_6$ \\
       $\bm{9}$ & $\irrGAP_6 \oplus \irrGAP_8$ \\
        $\bm{11}$ & $\irrGAP_4 \oplus \irrGAP_5 \oplus \irrGAP_8$ \\
        $\bm{13}$ & $\irrGAP_1 \oplus \irrGAP_5 \oplus \irrGAP_6 \oplus \irrGAP_8$ \\
       $\bm{15}$ & $\irrGAP_4 \oplus \irrGAP_5 \oplus \irrGAP_6 \oplus \irrGAP_8$ \\
       $\bm{17}$ & $\irrGAP_4 \oplus \irrGAP_6 \oplus (2) \irrGAP_8$ \\
        $\bm{19}$ & $\irrGAP_4 \oplus \irrGAP_5 \oplus (2) \irrGAP_6 \oplus \irrGAP_8$ \\
        $\bm{21}$ & $\irrGAP_1 \oplus \irrGAP_4 \oplus \irrGAP_5 \oplus \irrGAP_6 \oplus (2) \irrGAP_8$ \\ \bottomrule
    \end{tabular}
    \caption{Branching errors $\U(2) \downarrow \ico $}
\end{table}

Recall that for $\U(2)$ we had $\mathcal{E}_t = \{ \irrep{1}, \irrep{3}, \irrep{5}, \cdots, (\irrep{2t+1}) \}$. Also recall from \cref{lem:unitarytgroup1} that $\G$ is a unitary $t$-group iff $\expval{1, \irrechar} = 0$ for each $\irrE$ in $\mathcal{E}_t$ such that $\irrE\neq \irrep{1}$. From this we see that the smallest non-trivial $\irrE$ such that $\irre$ overlaps with $\irrGAP_1 = \irrep{1}$ is when $\irrE = \irrep{13}$ which is in $\mathcal{E}_6$. But there are no copies of $\irrGAP_1 = \irrep{1}$ (coming from non-trivial irreps) in either $\mathcal{E}_1$, $\mathcal{E}_2$, $\mathcal{E}_3$, $\mathcal{E}_4$, or $\mathcal{E}_5$. Thus we immediately see that $\ico$ is a unitary $5$-group. 

On the other hand, recall from \cref{lem:twistedunitarytgroup1} that $\G$ is a $\logirr$-twisted unitary $t$-group iff $\expval{\lambda \lambda^*, \irrechar} = 0$ for each $\irrE$ in $\mathcal{E}_t$ such that $\irrE\neq \irrep{1}$. If $\logirr = \irrGAP_3$ then $\logirr \otimes \logirr^* = \irrGAP_1 \oplus \irrGAP_4$. From the table above, we see that the smallest non-trivial $\irrE$ such that $\irre$ overlaps with either $\irrGAP_1$ or $\irrGAP_4$ is when $\irrE = \irrep{7}$ which is in $\mathcal{E}_3$. But there are no copies of $\irrGAP_1$ or $\irrGAP_4$ (coming from non-trivial irreps) in either $\mathcal{E}_1$ or $\mathcal{E}_2$. This proves that $\ico$ is a $\irrGAP_3$-twisted unitary $2$-group.

\newpage 
\onecolumngrid
\section{ {\large Example 2: $\Sigma(360\phi)$ Qutrit Codes}}

\begin{table}[htp]
    \centering
$$
\hspace*{-0.5cm}
\begin{array}{c|ccccccccccccccccc} \toprule 
& 1 & 1 & 1 & 45 & 45 & 45 & 120 & 120 & 90 & 90 & 90 & 72 & 72 & 72 & 72 & 72 & 72 \\ \midrule 
\irrGAP_1 & 1 & 1 & 1 & 1 & 1 & 1 & 1 & 1 & 1 & 1 & 1 & 1 & 1 & 1 & 1 & 1 & 1 \\ 
\irrGAP_2 & 3 & 3 \zeta_3^2 & 3 \zeta_3 & -1 & -\zeta_3^2 & -\zeta_3 & 0 & 0 & 1 & \zeta_3^2 & \zeta_3 & -\zeta_{15}^8-\zeta_{15}^2 & -\zeta_5^4-\zeta_5 & -\zeta_{15}^{13}-\zeta_{15}^7 & -\zeta
   _{15}^4-\zeta_{15} & -\zeta_{15}^{14}-\zeta_{15}^{11} & -\zeta_5^3-\zeta_5^2 \\
\irrGAP_3 & 3 & 3 \zeta_3^2 & 3 \zeta_3 & -1 & -\zeta_3^2 & -\zeta_3 & 0 & 0 & 1 & \zeta_3^2 & \zeta_3 & -\zeta_{15}^{14}-\zeta_{15}^{11} & -\zeta_5^3-\zeta_5^2 & -\zeta_{15}^4-\zeta_{15} &
   -\zeta_{15}^{13}-\zeta_{15}^7 & -\zeta_{15}^8-\zeta_{15}^2 & -\zeta_5^4-\zeta_5 \\
\irrGAP_4 & 3 & 3 \zeta_3 & 3 \zeta_3^2 & -1 & -\zeta_3 & -\zeta_3^2 & 0 & 0 & 1 & \zeta_3 & \zeta_3^2 & -\zeta_{15}^4-\zeta_{15} & -\zeta_5^3-\zeta_5^2 & -\zeta_{15}^{14}-\zeta_{15}^{11} &
   -\zeta_{15}^8-\zeta_{15}^2 & -\zeta_{15}^{13}-\zeta_{15}^7 & -\zeta_5^4-\zeta_5 \\
\irrGAP_5 & 3 & 3 \zeta_3 & 3 \zeta_3^2 & -1 & -\zeta_3 & -\zeta_3^2 & 0 & 0 & 1 & \zeta_3 & \zeta_3^2 & -\zeta_{15}^{13}-\zeta_{15}^7 & -\zeta_5^4-\zeta_5 & -\zeta_{15}^8-\zeta_{15}^2 & -\zeta
   _{15}^{14}-\zeta_{15}^{11} & -\zeta_{15}^4-\zeta_{15} & -\zeta_5^3-\zeta_5^2 \\
\irrGAP_6 & 5 & 5 & 5 & 1 & 1 & 1 & -1 & 2 & -1 & -1 & -1 & 0 & 0 & 0 & 0 & 0 & 0 \\
\irrGAP_7 & 5 & 5 & 5 & 1 & 1 & 1 & 2 & -1 & -1 & -1 & -1 & 0 & 0 & 0 & 0 & 0 & 0 \\
\irrGAP_8 & 6 & 6 \zeta_3^2 & 6 \zeta_3 & 2 & 2 \zeta_3^2 & 2 \zeta_3 & 0 & 0 & 0 & 0 & 0 & \zeta_3 & 1 & \zeta_3^2 & \zeta_3^2 & \zeta_3 & 1 \\
\irrGAP_9 & 6 & 6 \zeta_3 & 6 \zeta_3^2 & 2 & 2 \zeta_3 & 2 \zeta_3^2 & 0 & 0 & 0 & 0 & 0 & \zeta_3^2 & 1 & \zeta_3 & \zeta_3 & \zeta_3^2 & 1 \\
\irrGAP_{10} & 8 & 8 & 8 & 0 & 0 & 0 & -1 & -1 & 0 & 0 & 0 & -\zeta_5^4-\zeta_5 & -\zeta_5^4-\zeta_5 & -\zeta_5^4-\zeta_5 & -\zeta_5^3-\zeta_5^2 & -\zeta_5^3-\zeta_5^2 & -\zeta_5^3-\zeta_5^2 \\
\irrGAP_{11} & 8 & 8 & 8 & 0 & 0 & 0 & -1 & -1 & 0 & 0 & 0 & -\zeta_5^3-\zeta_5^2 & -\zeta_5^3-\zeta_5^2 & -\zeta_5^3-\zeta_5^2 & -\zeta_5^4-\zeta_5 & -\zeta_5^4-\zeta_5 & -\zeta_5^4-\zeta_5 \\
\irrGAP_{12} & 9 & 9 & 9 & 1 & 1 & 1 & 0 & 0 & 1 & 1 & 1 & -1 & -1 & -1 & -1 & -1 & -1 \\
\irrGAP_{13} & 9 & 9 \zeta_3^2 & 9 \zeta_3 & 1 & \zeta_3^2 & \zeta_3 & 0 & 0 & 1 & \zeta_3^2 & \zeta_3 & -\zeta_3 & -1 & -\zeta_3^2 & -\zeta_3^2 & -\zeta_3 & -1 \\
\irrGAP_{14} & 9 & 9 \zeta_3 & 9 \zeta_3^2 & 1 & \zeta_3 & \zeta_3^2 & 0 & 0 & 1 & \zeta_3 & \zeta_3^2 & -\zeta_3^2 & -1 & -\zeta_3 & -\zeta_3 & -\zeta_3^2 & -1 \\
\irrGAP_{15} & 10 & 10 & 10 & -2 & -2 & -2 & 1 & 1 & 0 & 0 & 0 & 0 & 0 & 0 & 0 & 0 & 0 \\
\irrGAP_{16} & 15 & 15 \zeta_3^2 & 15 \zeta_3 & -1 & -\zeta_3^2 & -\zeta_3 & 0 & 0 & -1 & -\zeta_3^2 & -\zeta_3 & 0 & 0 & 0 & 0 & 0 & 0 \\
\irrGAP_{17} & 15 & 15 \zeta_3 & 15 \zeta_3^2 & -1 & -\zeta_3 & -\zeta_3^2 & 0 & 0 & -1 & -\zeta_3 & -\zeta_3^2 & 0 & 0 & 0 & 0 & 0 & 0 \\ \bottomrule
\end{array}
$$
 \caption{Character table for qutrit group $\Sigma(360\phi)$ taken from GAP as \texttt{PerfectGroup(1080)}. The 1st row is the size of each conjugacy class. The 1st column is the name given in GAP. The 2nd column is the dimension of the irrep $|\irrGAP_i|$. Note that $\zeta_k = \exp(2\pi i/k)$ is a $k$-th root of unity.}
\end{table}

\twocolumngrid

Let $\G = \Sigma(360\phi)$ in $\U(3)$. The fundamental irrep is $\F = \irrep{3}$ and the anti-fundamental irrep is $\F^* = \irrepbar{3}$. Then using \cite{LieART} we can compute:
\begin{align*}
    \qty( \irrep{3} \otimes \irrepbar{3})^{\otimes 1} &= \irrep{1} \oplus \irrep{8} \numberthis \\
    \qty( \irrep{3} \otimes \irrepbar{3})^{\otimes 2} &= (2)\irrep{1} \oplus (4)\irrep{8} \oplus \irrep{10} \oplus \irrepbar{10} \oplus \irrep{27} \numberthis \\
     \qty( \irrep{3} \otimes \irrepbar{3})^{\otimes 3} &= (6)\irrep{1} \oplus (17)\irrep{8} \oplus (7)\irrep{10} \oplus (7)\irrepbar{10} \\
     & \qquad \oplus (9)\irrep{27}  \oplus (2)\irrep{35} \oplus (2)\irrepbar{35} \oplus \irrep{64} \numberthis \\
     \qty( \irrep{3} \otimes \irrepbar{3})^{\otimes 4} &= (23)\irrep{1} \oplus (80) \irrep{8} \oplus (42) \irrep{10} \oplus (42)\irrepbar{10} \\
     &\qquad \oplus (63) \irrep{27} \oplus (2) \irrep{28} \oplus (2) \irrepbar{28} \\
     &\qquad \oplus (23) \irrep{35} \oplus (23) \irrepbar{35} \oplus (16) \irrep{64}\\
     &\qquad \oplus (3) \irrep{81} \oplus (3) \irrepbar{81} \oplus \irrep{125}. \numberthis
\end{align*}
Thus we see that
\begin{align}
    \mathcal{E}_1 &= \{ \irrep{1}, \irrep{8} \} \\
    \mathcal{E}_2 &= \{ \irrep{1}, \irrep{8}, \irrep{10}, \irrepbar{10}, \irrep{27} \} \\
    \mathcal{E}_3 &= \{ \irrep{1}, \irrep{8}, \irrep{10}, \irrepbar{10}, \irrep{27}, \irrep{35}, \irrepbar{35}, \irrep{64} \} \\
    \mathcal{E}_4 &= \{ \irrep{1}, \irrep{8}, \irrep{10}, \irrepbar{10}, \irrep{27}, \irrep{28}, \irrepbar{28}, \irrep{35}, \irrepbar{35}, \irrep{64} , \irrep{81}, \irrepbar{81}, \irrep{125} \}.
\end{align}

Recall from \cref{lem:unitarytgroup1} that $\G$ is a unitary $t$-group iff $\expval{1, \irrechar} = 0$ for each $\irrE$ in $\mathcal{E}_t$ such that $\irrE\neq \irrep{1}$. From this we see that the smallest non-trivial $\irrE$ such that $\irre$ overlaps with $\irrGAP_1 = \irrep{1}$ is when $\irrE = \irrep{28}$ (or $\irrepbar{28}$ or $\irrep{125}$) which is in $\mathcal{E}_4$. But there are no copies of $\irrGAP_1 = \irrep{1}$ (coming from non-trivial irreps) in either $\mathcal{E}_1$, $\mathcal{E}_2$, or $\mathcal{E}_3$. Thus we immediately see that $\Sigma(360\phi)$ is a unitary $3$-group.

\begin{table}[htp]
    \centering
    $$
    \begin{array}{l|l} \toprule
        \irrE & \irre \\ \midrule
         \irrep{1} & \irrGAP_1 \\
         \irrep{8} & \irrGAP_{10} \\ \midrule
         \irrep{10}, \irrepbar{10} & \irrGAP_{15} \\
         \irrep{27} & \irrGAP_6 \oplus \irrGAP_7 \oplus \irrGAP_{11} \oplus \irrGAP_{12} \\ \midrule
         \irrep{35}, \irrepbar{35} & \irrGAP_{10} \oplus \irrGAP_{11} \oplus \irrGAP_{12} \oplus \irrGAP_{15} \\
         \irrep{64} & \irrGAP_6 \oplus \irrGAP_7 \oplus \irrGAP_{10} \oplus \irrGAP_{11} \oplus (2) \irrGAP_{12} \oplus (2) \irrGAP_{15}  \\ \midrule
         \irrep{28}, \irrepbar{28} & \irrGAP_1 \oplus \irrGAP_6 \oplus \irrGAP_7 \oplus \irrGAP_{10} \oplus \irrGAP_{12} \\
         \irrep{81}, \irrepbar{81} & \irrGAP_6 \oplus \irrGAP_7 \oplus (2) \irrGAP_{10} \oplus (2) \irrGAP_{11} \oplus \irrGAP_{12} \oplus (3) \irrGAP_{15} \\
         \irrep{125} & \irrGAP_1 \oplus (2) \irrGAP_6 \oplus (2) \irrGAP_7 \oplus (3) \irrGAP_{10} \oplus (3) \irrGAP_{11} \oplus (4) \irrGAP_{12} \oplus (2) \irrGAP_{15} \\ \bottomrule
    \end{array}
    $$
    \caption{Branching $\U(3) \downarrow \Sigma(360\phi) $}
\end{table}

On the other hand, recall from \cref{lem:twistedunitarytgroup1} that $\G$ is a $\logirr$-twisted unitary $t$-group iff $\expval{\lambda \lambda^*, \irrechar} = 0$ for each $\irrE$ in $\mathcal{E}_t$ such that $\irrE\neq \irrep{1}$. If $\logirr = \irrGAP_3$ or $ \logirr = \irrGAP_4$ then $\logirr \otimes \logirr^* = \irrGAP_1 \oplus \irrGAP_{11}$. From the table above, we see that the smallest non-trivial $\irrE$ such that $\irre$ overlaps with either $\irrGAP_1$ or $\irrGAP_{11}$ is when $\irrE = \irrep{27}$ which is in $\mathcal{E}_2$. But there are no copies of $\irrGAP_1$ or $\irrGAP_{11}$ (coming from non-trivial irreps) in $\mathcal{E}_1$. This proves that $\Sigma(360\phi)$ is both a $\irrGAP_3$-twisted unitary $1$-group and a $\irrGAP_4$-twisted unitary $1$-group.

\onecolumngrid
\section{GAP code for $\Sigma(360\phi)$ in $\U(3)$}
\begin{center}
\begin{tabular}{c}
\begin{lstlisting}[linewidth=12.4cm]
gap> g:=PerfectGroup(1080);; # Sigma(360phi) in U(3)
gap> ct:=CharacterTable(g);;
gap> f:=Irr(ct)[2];; # fundamental 3-dim irrep in U(3)
gap> Degree(f);
3

# Code family 1: X.3-twisted 1-group
gap> lambda1:=Irr(ct)[3];; # 3-dim code irrep
gap> Degree(lambda1);
3
gap> Norm(lambda1*f); 
1
gap> Norm(lambda1*f^2); 
3
gap> for n in [1..21] do
>       s:=ScalarProduct(lambda1,f^n);
>       if s>0 then Print("\t\tn=",n,"\t(m_lambda=",s,")\n"); fi;
> od;
                n=7     (m_lambda=15)
                n=10    (m_lambda=477)
                n=13    (m_lambda=13222)
                n=16    (m_lambda=358450)
                n=19    (m_lambda=9684357)


# Code family 2: X.4-twisted 1-group
gap> lambda2:=Irr(ct)[4];; # 3-dim code irrep
gap> Degree(lambda2);
3
gap> Norm(lambda2*f); 
1
gap> Norm(lambda2*f^2); 
3
gap> for n in [1..21] do
>       s:=ScalarProduct(lambda2,f^n);
>       if s>0 then Print("\t\tn=",n,"\t(m_lambda=",s,")\n"); fi;
> od;
                n=5     (m_lambda=1)
                n=8     (m_lambda=49)
                n=11    (m_lambda=1452)
                n=14    (m_lambda=39754)
                n=17    (m_lambda=1075727)
                n=20    (m_lambda=29054667)
\end{lstlisting}
\end{tabular}
\end{center}

\end{document}